\renewcommand\thmcontinues[1]{continued}
\crefname{assumption}{Assumption}{Assumptions}
\definecolor{blue-violet}{rgb}{0.54, 0.17, 0.89}
\definecolor{antiquefuchsia}{rgb}{0.57, 0.36, 0.51}
\definecolor{amethyst}{rgb}{0.6, 0.4, 0.8}
\definecolor{blue-violet}{rgb}{0.54, 0.17, 0.89}
\definecolor{ao}{rgb}{0.0, 0.5, 0.0}
\definecolor{blue(ncs)}{rgb}{0.0, 0.53, 0.74}
\definecolor{dgreen}{rgb}{0.12, 0.3, 0.17}
\definecolor{cadmiumgreen}{rgb}{0.0, 0.42, 0.24}
\definecolor{darkolivegreen}{rgb}{0.33, 0.42, 0.18}
\definecolor{dartmouthgreen}{rgb}{0.05, 0.5, 0.06}
\newcommand{\tild}{\raise.17ex\hbox{ $\scriptstyle\sim$ }}
\newcommand{\indep}{\mathrel{\text{\scalebox{1.07}{$\perp\mkern-10mu\perp$}}}}
\DeclareMathOperator{\E}{\mathbb{E}}
\DeclareMathOperator{\De}{De}
\DeclareMathOperator{\nd}{Nd}
\DeclareMathOperator{\An}{An}
\DeclareMathOperator{\Pa}{Pa}
 \theoremstyle{plain}
\newtheorem{theorem}{Theorem}
\newtheorem{proposition}{Proposition}
\newtheorem{lemma}{Lemma}
\newtheorem{assumption}{Assumption}
\theoremstyle{definition}
\newtheorem{definition}{Definition}
\newtheorem{example}{Example}
\theoremstyle{remark}
\newtheorem{remark}{Remark}
\let \tilde \widetilde
\let \epsilon \varepsilon
\newcommand{\g}{\mathcal{G}}
\newcommand{\Ccommon}{C_{\mathrm{conj.cause}}}
\newcommand{\Cpre}{C_{\mathrm{pre.treatment}}}
\newcommand{\CvdW}{C_{\mathrm{disj.cause}}}
\newcommand{\Cdisj}{C_{\cup}}
\newcommand{\Cconj}{C_{\cap}}
\newcommand{\AY}{\text{AY}}
\newcommand{\YA}{\text{YA}}
\newcommand{\CAY}{C_{\AY}}
\newcommand{\CYA}{C_{\YA}}
\newcommand{\predA}{\mathfrak{R}_{A}}
\newcommand{\predYA}{\mathfrak{R}_{Y|A}}
\newcommand{\PredA}{R_{A}}
\newcommand{\PredYA}{R_{Y|A}}
\begin{document}

\begin{frontmatter}

  \title{Confounder Selection: Objectives and Approaches}
  \runtitle{Confounder Selection}

  \begin{aug}
    \author[A]{\fnms{F.~Richard}~\snm{Guo}\ead[label=e1]{ricguo@statslab.cam.ac.uk}},
    \author[B]{\fnms{Anton~Rask}~\snm{Lundborg}\ead[label=e2]{arl@math.ku.dk}}
    \and
    \author[C]{\fnms{Qingyuan}~\snm{Zhao}\ead[label=e3]{qyzhao@statslab.cam.ac.uk}}

    \address[A]{F. Richard Guo is Research Associate, Statistical Laboratory,
    University of Cambridge, Cambridge, England\printead[presep={\ }]{e1}.}

    \address[B]{Anton Rask Lundborg is Postdoc, Department of Mathematical Sciences,
    University of Copenhagen, Copenhagen, Denmark\printead[presep={\ }]{e2}.}

    \address[C]{Qingyuan Zhao is Assistant Professor, Statistical Laboratory,
    University of Cambridge, Cambridge, England\printead[presep={\ }]{e3}.}

    \end{aug}

  \begin{abstract}
    Confounder selection is perhaps the most important step in
    the design of observational studies. A number of criteria, often
    with different objectives and approaches, have been proposed, and
    their validity and practical value have been debated in the
    literature. Here, we provide a unified review of these criteria and
    the assumptions behind them. We list several objectives that
    confounder selection methods aim to achieve and discuss the amount
    of structural knowledge required by different approaches. Finally,
    we discuss limitations of the existing approaches and implications
    for practitioners.
  \end{abstract}

  \begin{keyword}
    \kwd{Causal graphical model}
    \kwd{Causal inference}
    \kwd{Covariate adjustment}
    \kwd{Common causes}
    \kwd{Study design}
    \kwd{Variable selection}
  \end{keyword}

\end{frontmatter}

\section{Introduction}
\label{sec:introduction}

When designing an observational study to estimate the causal effect of
a treatment on an outcome, possibly the most important task
is to select covariates that should be measured and controlled
for. A variety of criteria and methods have been proposed, sometimes
as formal algorithms but most often as loosely stated principles. These
proposals differ quite substantially in their objectives and
approaches, so it is not uncommon that methods developed for one
objective fail in achieving another, thus causing a great deal
of confusion for methodologists and practitioners.

To get a sense of the variety of criteria for confounder selection in
the literature, a brief literature review is helpful. In a highly cited
tutorial of propensity score
methods, \citet[p.\ 414]{austin2011} suggested
that there are four possible sets of confounders to control for: ``all measured
baseline covariates, all baseline covariates that are associated with treatment
assignment, all covariates that affect the outcome (i.e., the
potential confounders), and all covariates that affect both treatment
assignment and the outcome (i.e., the true confounders).'' Citing
previous simulation studies
\citep{austin07_compar_abilit_differ_propen_score,brookhart06_variab_selec_propen_score_model},
Austin concluded that ``there were merits to including only the
potential confounders or the true confounders in the propensity score
model''. The idea that only controlling for the ``true confounders''
is sufficient, if not superior, is commonplace in practice
\citep{glymour08_method_chall_causal_resear_racial} and methodological
development
\citep{ertefaie17_variab_selec_causal_infer_using,shortreed17_outcom_lasso,koch20_variab_selec_estim_causal_infer}.
We will call this the ``conjunction heuristic''; other authors have referred to it as the
  ``common cause principle''. It is well known that
this approach can select too few covariates if some confounders are
not observed \citep{vanderweele2019principles}.

Another widely used criterion for confounder selection is to simply
use all observed pre-treatment covariates. This is primarily driven by
the heuristic that an observational study should be designed
to emulate a randomized experiment, as the latter stochastically
balances all pre-treatment covariates, observed or unobserved. This
consideration led \citet[p.\ 1421]{rubin2009should} to conclude that ``I
cannot think of a credible real-life situation where I would
intentionally allow substantially different observed distributions of a
true covariate in the treatment and control groups.'' There, Rubin was
defending this ``pre-treatment heuristic'' against counterexamples
stemming from the graphical theory of causality
\citep{shrier2008,pearl2009remarks,sjolander2009}. In these
counterexamples, conditioning on some pre-treatment covariates
introduces ``collider bias'' to the causal analysis because
conditioning on a variable can induce dependence between its parents
in a causal graph; see \cref{fig:sel}(a) below. This is often seen as
a dispute in the long-standing friction between the graphical and the potential
outcome (or counterfactual) approaches towards causal inference: the
mathematical counterexample seems to make it easy for the graphical approach
to claim a victory. However, the target trial emulation perspective
may still be very useful in practice
\citep{hernan2016using,hernan2020causal}.

In response to the drawbacks of the common cause and pre-treatment
heuristics, \citet{vanderweele2011new} proposed the ``disjunctive cause
criterion'' that selects pre-treatment covariates that are causes of
the treatment, the outcome, or both (throughout this article, causes
include both direct and indirect causes). They proved a remarkable
property that the proposed subset is sufficient to control for
confounding if the observed covariates contains at least one subset
that is sufficient to control for confounding. This criterion was
later revised in \citet{vanderweele2019principles} to additionally
account for (1) instrumental variables, which tend to reduce
efficiency under the causal model assumption or amplify bias in
misspecified models, and (2) proxies of unmeasured confounders, which
tend to reduce bias when controlled for.

Another approach for confounder selection in causal inference is statistical variable
selection: rather than using substantive knowledge
about the causes of the treatment and the outcome, can we select
confounders using statistical tests (usually of conditional
independence) with observational data? Besides the methods mentioned previously
that use the conjunction heuristic,
\citet[Theorem 4.3]{robins97_causal_infer_compl_longit_data} implied that one can
sequentially discard covariates that are irrelevant to determining the
treatment or the outcome, which may be viewed as another instance of the conjunction heuristic; see also
\citet[\S3.3]{greenland99_confoun_collap_causal_infer} and
\citet[\S7.4]{hernan2020causal}. Later, \citet{de2011covariate}
extended this approach and devised an iterative algorithm that
converges to a minimal subset of covariates that suffice to control for confounding; see \cref{sec:no-structural} and also
\cite{persson17_data_driven_algor_dimen_reduc_causal_infer}. Others
have proposed a more global approach that attempts to learn the
causal graph (or its Markov equivalence class) first and then use the
learned graph to determine the set of covariates to control for
\citep{maathuis2015generalized,haeggstroem17_data_driven_confoun_selec_via}; however,
the learned graph class typically does not yield a unique set of
covariates to control for. A statistical version of the disjunctive cause criterion, albeit
motivated by some completely different considerations, can be found in
\citet{belloni13_infer_treat_effec_after_selec}. There, the authors
showed that using the union of variables selected in the treatment
model and the outcome model is critical to coping with incorrectly
selected variables in high-dimensional problems.

More recently, an emerging literature considers the problem of
selecting the optimal set of covariates to control for that maximizes
the statistical efficiency in estimating the causal effect of interest \citep{kuroki2003covariate,henckel2022,rotnitzky2020efficient,guo2022variable}.
These algorithms require knowing the structure of the underlying causal graph.

At this point, it should have become abundantly clear that there are
many different objectives and approaches in confounder selection,
which often lead to conflicting views and criteria. In the rest of
this article, we will give a more in-depth discussion of these
objectives (\Cref{sec:object-conf-select}) and existing approaches
(\Cref{sec:appr-conf-select}), in hope that a clear description of
the considerations in the literature will reveal their merits and
limitations.
We conclude the
article with some further discussion in
\Cref{sec:discussion}. Technical background and proofs are deferred to
the Appendix. Throughout the paper, random vectors are sometimes
viewed as sets, allowing us to use set operations to describe
confounder selection.

\section{Objectives of confounder selection}
\label{sec:object-conf-select}

We consider the canonical problem of
estimating the causal effect of a binary treatment $A$ on an outcome
$Y$. For $a=0,1$, let
$Y_a$ denote the potential outcome had the treatment been administered
at level $a$. Let $S$ be the set of pre-treatment covariates that are
already measured or can
be potentially measured. To be practical, we require any
selected set of confounders to be a subset of $S$.

We use $Z$ to denote the superset of $S$ that consists of all the pre-treatment covariates,
observable or not, that are relevant to estimating the
causal effect of $A$ on $Y$.
Formally, we assume that $Z \cup \{A,Y\}$ is \emph{causally closed} in the
sense that every non-trivial common cause of two distinct variables in
$Z \cup \{A, Y\}$ also belongs to the set itself. Here, a common cause of
two variables is said to be non-trivial if it affects either variable directly, i.e.,
not only through the other common causes of the two variables.
Without this restriction, the set $Z$ may be too large because any
cause of a common cause of $A$ and $Y$ is still a common cause of $A$
and $Y$. For a concrete choice of $Z$, we can take it
to be the smallest superset of $S$ such that $Z \cup \{A,Y\}$ is causally closed, given by
\begin{equation} \label{eqs:Z-by-closure}
Z = \overline{S \cup \{A,Y\}} \setminus \{A,Y\},
\end{equation}
where $\overline{S \cup \{A,Y\}}$ is the \emph{causal closure} of $S \cup \{A,Y\}$.
It can be shown that this choice of $Z$ is pre-treatment (\cref{prop:Z-pre}).
The reader is referred to \cref{apx:closure} for a precise definition
of causal closure using causal graphical models.

\subsection{Primary objectives}

Naturally, the primary objective of confounder selection is
to identify a subset of covariates $C \subseteq S$ that suffices to
control for confounding. That is, we would like
to select a subset $C$ such that every potential outcome is
independent of the
treatment in every stratum defined by $C$.
\begin{definition} \label{def:control}
  A set $C$ of variables is said to \emph{control for confounding} or
  be a \emph{sufficient adjustment set} if
  \begin{equation} \label{eq:no-confounding}
  Y_a \indep A \mid C \quad \text{for } a=0,1.
\end{equation}
\end{definition}
The condition \eqref{eq:no-confounding} is known as weak conditional ignorability
\citep{rosenbaum83_centr_role_propen_score_obser,greenland2009identifiability}
or conditional exchangeability \citep[\S2.2]{hernan2020causal} in the
literature. Depending on the problem, there may be no, one, or
multiple sufficient adjustment sets.

When a sufficient adjustment set $C$ exists, under the positivity or overlap
assumption that $0 < P(A=1 \mid C) < 1$ holds with probability one, we can identify
the marginal distribution of the potential outcomes by ``controlling''
or ``adjusting'' for $C$ in the following way:
\begin{equation} \label{eq:confounder-adjustment}
  \mathbb{P}(Y_a = y) = \E \left[ \mathbb{P}(Y=y \mid A=a,
    C)\right], \quad a=0,1.
\end{equation}
This is also known as the
back-door formula \citep{pearl1993bayesian}.
Equation \eqref{eq:confounder-adjustment} assumes the outcome $Y$ is
discrete and can be easily extended to the continuous case. It
follows that functionals defined by the distributions of $Y_a$ for $a=0,1$
(such as the average treatment effect $\E[Y_1] - \E[Y_0]$) can be
identified from the observed distribution of $(A,C,Y)$. 

In this article, we will focus on
confounder adjustment in the sense given by
\eqref{eq:confounder-adjustment}, as this is
by far the most popular approach in practice. Before proceeding, we
mention some alternative approaches when no such set $C$
exists. Besides using the back-door formula
\eqref{eq:confounder-adjustment}, the causal effect may be
identified using the front-door formula \citep{pearl1995causal} and
other graph-based formulae
\citep{tian2002general,shpitser2006identification,shpitser2022multivariate}. With
additional assumptions, causal identification may be possible by using
instrumental variables
\citep{imbens14_instr_variab,wang17_bound_effic_multip_robus_estim}
and proxies of the confounders
\citep{miao18_ident_causal_effec_with_proxy,tchetgen2020introduction}. Note that the last
approach requires using a different adjustment formula for proxy confounders; a different strand of literature has
investigated the bias-reducing effect of adjusting for proxy
confounders via \eqref{eq:confounder-adjustment}
\citep{greenland80_the_effec_of_miscl_in,ogburn12_nondif_miscl_binar_confoun,ogburn12_bias_atten_resul_nondif_mismeas}.

Much of the disagreement and confusion about confounder selection can be attributed
to not distinguishing the observable set of pre-treatment covariates $S$ from the
full set of pre-treatment covariates $Z$, or more broadly speaking,
not distinguishing the design of observational studies from the
analysis of observational studies.
In fact, the common cause and the pre-treatment heuristics
both achieve the primary objective if $S = Z$ (because we assume $Z
\cup \{A,Y\}$ is causally closed); see
\Cref{apx:graphical}. This may explain why these criteria are often
disputed when different parties make different assumptions about $S$
and $Z$.

\subsection{Secondary objectives}

Besides controlling for confounding, there are many other
objectives for confounder selection. An incomplete list includes
\begin{enumerate}
\item dimension reduction for better transparency, interpretability,
  accuracy or stability
\citep{de2011covariate,vansteelandt2012model,belloni13_infer_treat_effec_after_selec,greenland2016outcome,loh2021confounder},

\item robustness against misspecified modeling assumptions, especially
  when a posited confounder can be an instrumental variable
\citep{bhattacharya2007instrumental,wooldridge16_shoul_instr_variab_be_used,myers11_effec_adjus_instr_variab_bias,ding17_instr_variab_as_bias_amplif},

\item statistical efficiency in estimating the causal effect \citep{kuroki2003covariate,brookhart06_variab_selec_propen_score_model,henckel2022,rotnitzky2020efficient,guo2022variable,tang2022ultra}, and

\item ethical and economic costs of data collection
  \citep{smucler2022note}.
\end{enumerate}

Generally speaking, these considerations are not as crucial as the
primary objective of controlling for confounding bias. Nonetheless,
they are still very important and may decide whether an
observational study is successful. In the following section, we will give a review
of various approaches to confounder selection. As these methods are
usually developed to achieve or optimize just one objective,
practitioners must balance achieving the different objectives for their problem at hand.

\section{Approaches to confounder selection}
\label{sec:appr-conf-select}

As suggested by \citet{vanderweele2019principles}, confounder
selection methods can be based on substantive knowledge or
statistical analysis. Substantive knowledge of the causal structure is usually
represented by a causal directed acyclic graph (DAG)
$\g$ over the vertex set $\{A,Y\} \cup Z$.\footnote{This causal graph $\g$
  can be the \emph{latent projection} \citep{verma1990equivalence} of a larger ground causal DAG $\bar{\g}$ onto $\{A,Y\} \cup Z$. Because we assume that $\{A,Y\}\cup
  Z$ is causally closed with respect to the underlying $\bar{\g}$ (see \cref{apx:closure}), the latent projection $\g$ does not contain bidirected edges and is
  hence a DAG.} A directed edge in the graph such as $u \rightarrow v$
represents a direct causal effect from $u$ to $v$.
Let $\An_{\g}(A)$ be the set of ancestors of $A$ in $\g$, including $A$ itself.
That is, $\An_{\g}(A)$ consists of $A$ and all the variables in $\g$ that have a causal path to
$A$.

Given $\g$, we assume that the distribution of the potential outcomes obeys the single-world
intervention graphs derived from $\g$ \citep{richardson2013single}, which implies that the distribution of the observed variables $\{A,Y\} \cup Z$
obeys the Bayesian network model represented by $\g$
\citep{pearl1988book}. The conditional
independencies imposed by the model can then be read off from $\g$ with the
d-separation criterion. In the context of confounder
selection, a useful concept is the \emph{faithfulness} assumption, which posits the reverse:
conditional independence in the observed distribution implies the
corresponding d-separation in $\g$. A short primer on causal graphical
models can be found in \cref{apx:graphical}.

We shall always assume that $A$ temporally precedes $Y$, which implies $Y \notin \An_{\g}(A)$,
since otherwise $A$
cannot have any causal effect on $Y$ and the problem is trivial.
Recall that $Z$ consists of observable and unobservable pre-treatment covariates
such that $S \subseteq Z$ and $Z \cup \{A, Y\}$ is causally closed.
Therefore, $\g$ contains not only the observable variables $\{A,Y\}
\cup S$ but also the unobservable variables $Z \setminus S$.

Next, we give a review of a number of confounder selection methods,
which are organized according to their assumptions on the amount of
structural knowledge of $\mathcal{G}$.

\subsection{Methods that assume full structural knowledge}
\label{sec:full-structural}

With full structural knowledge, achieving the primary objective of
confounding control is a solved problem by the well-known back-door
criterion due to \citet{pearl1993bayesian,pearl2009}:

\begin{proposition}[Back-door criterion] \label{prop:backdoor}
A set $C$ of pre-treatment covariates controls for confounding if $C$
blocks all the back-door paths from $A$ to $Y$ in $\g$. Furthermore,
the reverse is true under the faithfulness
assumption.\footnote{Here the faithfulness is with respect to the corresponding SWIG; see \cref{apx:swig}.}
\end{proposition}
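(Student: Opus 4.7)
The plan is to work entirely inside the single-world intervention graph (SWIG) $\g(a)$ derived from $\g$, since the excerpt already assumes that the joint law of $\{A,Y\}\cup Z$ and its potential outcomes obeys the Bayesian network represented by the SWIG. The key structural observation is that in $\g(a)$ the vertex $A$ is split into a random node $A$ retaining all arrows into $A$ and a fixed node $a$ inheriting all arrows out of $A$. Hence every path between $A$ and $Y_a$ in $\g(a)$ must begin with an arrowhead at $A$, and under the canonical relabeling of $a$-descendants by their potential-outcome counterparts, such paths are in bijection with the back-door paths from $A$ to $Y$ in $\g$. I would state this bijection as a lemma and verify that it preserves colliders and non-colliders.

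For the sufficiency direction, let $C\subseteq S$ block every back-door path from $A$ to $Y$ in $\g$. Since $C$ is pre-treatment we have $C\cap \An_{\g}(A)^c\subseteq$ non-descendants of $A$, so no element of $C$ is split in $\g(a)$ and the image of $C$ in the SWIG is literally $C$. Using the path bijection, $C$ also blocks every path from $A$ to $Y_a$ in $\g(a)$, i.e., $A$ and $Y_a$ are d-separated given $C$ in the SWIG. The global Markov property of the SWIG then delivers $Y_a \indep A \mid C$ for $a=0,1$, which is Definition 1.

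For the reverse direction I would contrapose: if some back-door path $\pi$ from $A$ to $Y$ in $\g$ is not blocked by $C$, then its image in $\g(a)$ is an unblocked path from $A$ to $Y_a$ given $C$, so $A$ and $Y_a$ are d-connected given $C$ in $\g(a)$. The SWIG faithfulness assumption (as clarified in the footnote) then yields $Y_a \not\indep A \mid C$, contradicting the assumption that $C$ controls for confounding.

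The main obstacle is the bookkeeping behind the path correspondence, specifically verifying that the splitting of $A$ into $(A,a)$ together with the relabeling of $a$-descendants preserves the collider/non-collider status of every intermediate vertex on a back-door path, and that conditioning on a pre-treatment $C$ acts identically in $\g$ and $\g(a)$ because $C$ lies entirely among non-descendants of $A$. Once this correspondence is recorded as a lemma, both directions reduce to an application of d-separation together with the Markov and faithfulness properties already invoked in the excerpt.
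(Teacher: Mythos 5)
The paper does not actually prove this proposition — it is stated as a known result credited to Pearl — but your SWIG-based argument is correct and is precisely the route the paper's framework presupposes: the footnote defines faithfulness relative to the SWIG, and the paper's own proof of \cref{prop:common-cause} in \cref{apx:common-cause} argues by d-separation between $A$ and $Y(a)$ in $\g(a)$ in exactly this way. Two small points to record when you carry out the bookkeeping: (i) a path from $A$ to $Y_a$ in $\g(a)$ can pass through the fixed node $a$ as a non-collider (entered against an arrow), and such paths have no preimage among the back-door paths of $\g$; they are harmless only because SWIG d-separation treats paths through fixed vertices as blocked, so the ``bijection'' should be stated for paths avoiding $a$; (ii) besides $C$ not being relabeled, you also need that the collider clause of d-connection is unaffected by the split, which holds because no directed path from a collider to the pre-treatment set $C$ can pass through $A$ (otherwise $C$ would contain a descendant of $A$), so ancestor-of-$C$ relations agree in $\g$ and $\g(a)$.
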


Here, a path from $A$ to $Y$ in $\g$ is called a \emph{back-door path}
if it starts with an edge into $A$.
Note that given a sufficient adjustment set $C$,
a superset of $C$ need not continue to block all the back-door paths. This is known
as the ``collider bias'' or ``M-bias'' \citep{greenland1999causal} and an example
is given in \cref{fig:sel}(a).

For the generalization that allows adjusting
for post-treatment variables, see \citet{shpitser2010validity}; for
the extension to graphs that describe Markov equivalence classes, see also
\citet{perkovic2018complete}.

When full structural knowledge is available, the graph can also
inform secondary objectives of confounder selection. In particular,
there is a line of work that selects the adjustment set to minimize
the asymptotic variance in estimating the causal effect
\citep{kuroki2003covariate,henckel2022,witte2020efficient,rotnitzky2020efficient}.
When all pre-treatment covariates are observed (i.e.\ $Z=S$), there is
an optimal choice of $C$ that minimizes the variance of the causal
effect estimator under linear and semiparametric models. This so-called
\emph{optimal adjustment set} consists of parents of the
outcome $Y$ and parents of all mediators between $A$ and $Y$, excluding $A$ and any descendant of a mediator; see \citet[\S3.4]{henckel2022}.

\subsection{Methods that assume partial structural knowledge}
\label{sec:partial-structural}

Of course, the full graph $\g$ is rarely known in practice. With just
partial knowledge of the causal graph, there exist several methods
for confounder selection. It is worthwhile to scrutinize the
assumptions that guarantee the validity of these methods.

\begin{assumption} \label{assump:S=Z}
  All relevant pre-treatment covariates are (or can be) measured, i.e., $S=Z$.
\end{assumption}
\begin{assumption} \label{assump:S-already}
  $S$ is a sufficient adjustment set.
\end{assumption}
\begin{assumption} \label{assump:some-subset-S}
  At least one subset of $S$ is a sufficient adjustment set.
\end{assumption}

The assumptions are listed from the strongest to the weakest:
\Cref{assump:S=Z} implies \Cref{assump:S-already} (see
\cref{prop:Z-adj} in the appendix), and \Cref{assump:S-already} trivially implies \Cref{assump:some-subset-S}. \Cref{assump:some-subset-S} is the weakest possible
in the sense that without this assumption, it is impossible to achieve
the primary objective of confounding control.

In graphical terms, the \emph{conjunctive} (or \emph{common}) \emph{cause criterion} selects
\begin{equation*} \label{eqs:C-comm}
  \Ccommon \equiv  S \cap \An_{\g}(A) \cap \An_{\g}(Y),
\end{equation*}
and the \emph{pre-treatment criterion} selects
\begin{equation*} \label{eqs:C-pre}
  \Cpre \equiv S.
\end{equation*}
Although the pre-treatment criterion does not use any structural
knowledge other than the temporal order, it is included here to
facilitate comparison with other criteria.

\begin{proposition} \label{prop:common-cause}
  Under \Cref{assump:S=Z}, $\Ccommon$ is a sufficient adjustment set.
\end{proposition}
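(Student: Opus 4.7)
\begin{proofsketch}
The plan is to apply the back-door criterion (\cref{prop:backdoor}) and verify that $\Ccommon$ blocks every back-door path from $A$ to $Y$ in $\g$; since $\Ccommon \subseteq S$ consists of pre-treatment covariates, the criterion applies and gives sufficiency without needing faithfulness. I will argue by contradiction: suppose some back-door path $p\colon A = U_0 - U_1 - \cdots - U_n = Y$ is active given $\Ccommon$.

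First, I will observe that every collider on $p$ lies in $\An_{\g}(A) \cap \An_{\g}(Y)$: activity of $p$ forces each collider to have a descendant (possibly itself) in $\Ccommon \subseteq \An_{\g}(A) \cap \An_{\g}(Y)$, and since ancestor-hood composes, the collider is itself a common ancestor of $A$ and $Y$. Next, I will cut $p$ into maximal segments at its colliders, using $A$ and $Y$ as the outer endpoints, so that each segment's interior consists of non-colliders of $p$. A standard arrow-chasing argument then produces, in each segment, an interior ``source'' vertex $V^{*}$ whose two segment-edges both point outward from $V^{*}$. Such $V^{*}$ is a directed ancestor of both endpoints of its segment; by the preceding step these endpoints lie in $\An_{\g}(A) \cap \An_{\g}(Y) \cup \{A,Y\}$, so $V^{*} \in \An_{\g}(A) \cap \An_{\g}(Y)$. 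Because $V^{*}$ is a vertex of $\g$ distinct from $A$ and $Y$, it belongs to $Z = S$ by \cref{assump:S=Z}, and hence $V^{*} \in \Ccommon$. But $V^{*}$ is a non-collider of $p$ lying in $\Ccommon$, which blocks $p$, contradicting activity.

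The delicate step is producing a source vertex in the segment ending at $Y$, because the back-door definition only constrains the edge at $A$ and leaves the direction of the edge at $Y$ unspecified. I would resolve this using the temporal assumption $Y \notin \An_{\g}(A)$: if the $Y$-end edge of $p$ pointed out of $Y$, then chasing arrows along the segment back toward its other endpoint $W$ (either a collider with $W \in \An_{\g}(A)$ by the first step, or $W = A$ when the path has no colliders) would either force a collider inside the segment's interior (contradicting the decomposition) or yield a directed path $Y \to \cdots \to W \to \cdots \to A$, contradicting $Y \notin \An_{\g}(A)$. This rules out the outgoing case at $Y$, after which the source-vertex construction proceeds exactly as in every other segment, completing the argument.
\end{proofsketch}
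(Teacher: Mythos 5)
Your proof is correct, and it rests on the same underlying idea as the paper's: any back-door path that is active given $\Ccommon$ must contain a non-collider that is a common ancestor of $A$ and $Y$, which under \cref{assump:S=Z} lies in $\Ccommon$ and therefore blocks the path. The execution differs in two ways. First, the paper argues directly in the SWIG $\g(a)$, showing $A$ and $Y(a)$ are d-separated given $\Ccommon$, rather than invoking \cref{prop:backdoor} as a black box; the SWIG construction (no edges out of the random part of $A$) automatically supplies the ``edge into $A$'' constraint you get from the back-door definition. Second, and more substantively, your path analysis is a full segment decomposition: you show every collider is a common ancestor, cut the path at its colliders, and extract a source vertex in every segment, with a separate case analysis for the direction of the edge at $Y$. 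The paper instead picks a single well-chosen vertex: the collider $\gamma$ closest to $A$ (or, in the collider-free case, the unique source of the path) and the non-collider $\delta$ immediately preceding it, noting $\delta \to \gamma \in \An_{\g}(A)\cap\An_{\g}(Y)$ forces $\delta \in \Ccommon$. This sidesteps your ``delicate step'' at the $Y$ end entirely, since $\delta$ is located from the $A$ side where the edge orientation is already pinned down. Your argument proves more than is needed (a blocking vertex in every segment) at the cost of extra bookkeeping, but every step checks out, including the degenerate case where the last collider is adjacent to $Y$, which your $Y \in \An_{\g}(A)$ contradiction also covers.
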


\begin{proposition} \label{prop:pretreatment}
  Under \Cref{assump:S-already}, $\Cpre$ is a sufficient adjustment
  set.
\end{proposition}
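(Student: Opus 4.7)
The plan is essentially to unpack definitions. First I would recall that the pre-treatment criterion is defined by the equation $\Cpre \equiv S$, so proving that $\Cpre$ is a sufficient adjustment set reduces immediately to proving that $S$ itself is a sufficient adjustment set. But this is exactly what \Cref{assump:S-already} asserts verbatim. Hence the proof collapses to a single line once the relevant definitions are cited.

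In a bit more detail, I would invoke \Cref{def:control} to translate the phrase ``sufficient adjustment set'' into the conditional independence statement $Y_a \indep A \mid S$ for $a = 0, 1$, observe that by the definitional equality $\Cpre = S$ this is the same as $Y_a \indep A \mid \Cpre$ for $a = 0, 1$, and conclude that $\Cpre$ is a sufficient adjustment set in the sense of \Cref{def:control}. No graphical reasoning, d-separation argument, or back-door analysis is needed.

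There is no genuine mathematical obstacle here: the statement is tautological by construction, and I do not expect any step to require real work. The proposition is presumably recorded mainly as a baseline for comparison with \Cref{prop:common-cause}, which, although it superficially resembles the present statement, instead invokes the stronger \Cref{assump:S=Z} and does require a nontrivial graphical argument (for instance, verifying the back-door criterion from \Cref{prop:backdoor} for the set $S \cap \An_{\g}(A) \cap \An_{\g}(Y)$). The contrast highlights the virtue of $\Cpre$: by adjusting for everything that is observable, it bypasses graphical considerations entirely, at the cost of needing the a priori guarantee that $S$ is already sufficient.
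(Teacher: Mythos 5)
Your proof is correct and matches the paper exactly: the paper dismisses \cref{prop:pretreatment} with the single remark that it ``is just a tautology,'' which is precisely your observation that $\Cpre \equiv S$ makes the conclusion a verbatim restatement of \cref{assump:S-already} via \cref{def:control}. Your closing remarks on why the proposition is nonetheless recorded (as a baseline for comparison with \cref{prop:common-cause} and \cref{prop:disj-cause}) also align with the paper's stated motivation.
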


\begin{figure}[t]
  \centering
  \begin{tikzpicture}
    \tikzset{rv/.style={circle,inner sep=1pt,fill=gray!20,draw,font=\sffamily},
      fv/.style={circle,inner sep=1pt,fill=gray!40,draw,font=\sffamily},
      node distance=12mm, >=stealth}
    \begin{scope}
      \node[rv] (A) {$A$};
      \node[right of=A] (H) {};
      \node[rv, right of=H] (Y) {$Y$};
      \node[fv, above of=A, yshift=5mm] (U1) {$U_1$};
      \node[fv, above of=Y, yshift=5mm] (U2) {$U_2$};
      \node[rv, above of=H] (L) {$L$};
      \draw[->,very thick] (U1) -- (L);
      \draw[->,very thick] (U2) -- (L);
      \draw[->,very thick] (U1) -- (A);
      \draw[->,very thick] (U2) -- (Y);
      \draw[->,very thick] (A) -- (Y);
      \node[below=2mm of H] {(a)};
    \end{scope} \begin{scope}[xshift=5cm]
      \node[rv] (A) {$A$};
      \node[right of=A] (H) {};
      \node[rv, right of=H] (Y) {$Y$};
      \node[rv, above of=A, yshift=5mm] (X1) {$X_1$};
      \node[rv, above of=Y, yshift=5mm] (X2) {$X_2$};
      \draw[->,very thick] (X1) -- (A);
      \draw[->,very thick] (X2) -- (Y);
      \draw[->,very thick] (A) -- (Y);
      \draw[->,very thick] (X1) -- (X2);
      \node[below=2mm of H] {(b)};
    \end{scope} \end{tikzpicture}
  \caption{(a) M-bias: $U_1,U_2$ are unobserved and $S = \{L\}$. It holds that $A \indep Y(a)$ for $a=0,1$ unconditionally. However, controlling for $L$ opens the back-door path $A \leftarrow U_1 \rightarrow L \leftarrow U_2 \rightarrow Y$ and hence would introduce bias. (b) Suppose $S = \{X_1, X_2\}$ and we have $\PredA(S) = \{X_1\}$, $\PredYA(S) = \{X_2\}$. Using the definitions from Section~\ref{sec:no-structural}, we note that the conjunctive $\Cconj(S) = \PredA(S) \cap \PredYA(S) = \emptyset$ fails to control for confounding. Meanwhile, $\CAY(S) = \{X_1\}$, $\CYA(S) = \{X_2\}$ and $\Cdisj(S) = \{X_1, X_2\}$ all suffice to control for confounding.}
  \label{fig:sel}
\end{figure}
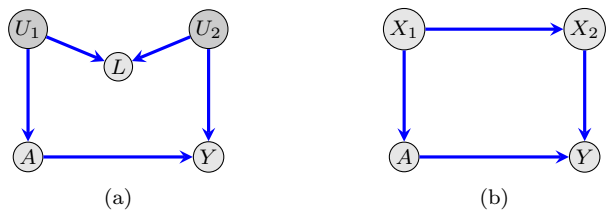

\Cref{prop:pretreatment} is just a tautology.
\Cref{prop:common-cause} is proved in \Cref{apx:graphical}.
The assumptions stated in these two propositions are the weakest possible
on our list. Indeed, the conjunctive cause can fail under \cref{assump:S-already}:
consider $S = \{X_2\}$ in \cref{fig:sel}(b), which leads to $\Ccommon = \emptyset$ and fails to block the back-door path.
Also, the pre-treatment criterion may not hold under \cref{assump:some-subset-S}:
for $S = \{L\}$ in \cref{fig:sel}(a), while $\emptyset \subset S$ is a sufficient adjustment set, $S$ is not.

Alternatively, one may use the \emph{disjunctive cause criterion}:
\begin{equation*}
  \CvdW \equiv S \cap \left[\An_{\g}(A) \cup \An_{\g}(Y) \right].
\end{equation*}
\citet{vanderweele2011new} proved the following remarkable result
regarding this criterion.

\begin{proposition} \label{prop:disj-cause}
  Under \cref{assump:some-subset-S} and the faithfulness assumption,
  $\CvdW$ is a sufficient adjustment set.
\end{proposition}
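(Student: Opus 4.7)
By the back-door criterion (\cref{prop:backdoor}) together with the faithfulness assumption, a set $C$ of pre-treatment covariates is a sufficient adjustment set if and only if $C$ blocks every back-door path from $A$ to $Y$ in $\g$. By \cref{assump:some-subset-S} there exists $W \subseteq S$ that is a sufficient adjustment set, so by faithfulness $W$ blocks every back-door path. The goal is to show that $\CvdW = S \cap [\An_{\g}(A) \cup \An_{\g}(Y)]$ blocks every back-door path, since the ``if'' direction of \cref{prop:backdoor} (which does not require faithfulness) will then deliver the conclusion. I would do this in two steps: first prune $W$ down to its ancestors of $\{A,Y\}$, then expand from the pruned set to $\CvdW$, verifying at each stage that blocking is preserved.

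\emph{Pruning.} Let $W^{\star} = W \cap [\An_{\g}(A) \cup \An_{\g}(Y)]$. I would show $W^{\star}$ still blocks every back-door path. This follows from the standard fact that a d-separating set may be restricted to ancestors of the source, target, and conditioning set without losing the separation property. Specialized to back-door paths, the argument is that any non-ancestor $u \in W$ appearing as a non-collider on a back-door path $p$ can be traced along its outgoing edge on $p$; this trace must reach either an endpoint (contradicting that $u \notin \An_{\g}(A) \cup \An_{\g}(Y)$) or a collider $c$ on $p$. Since $W$ blocks $p$ and $u \in \An_{\g}(c)$, we get that $c$ has no descendants in $W$, so $c$ itself already blocks $p$ as a closed collider, rendering $u$'s presence in the adjustment set redundant.

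\emph{Expansion.} Since $W \subseteq S$, we have $W^{\star} \subseteq \CvdW \subseteq \An_{\g}(A) \cup \An_{\g}(Y)$. By induction on $|T \setminus W^{\star}|$, it suffices to show that adding a single element $v \in \An_{\g}(A) \cup \An_{\g}(Y)$ to a sufficient set $W^{\star}$ of ancestors preserves sufficiency. Suppose, for contradiction, that some back-door path $p$ were blocked by $W^{\star}$ but unblocked by $W^{\star} \cup \{v\}$. Then the addition of $v$ must open a collider $c$ on $p$, meaning either $c = v$ or $v$ is a descendant of $c$; in either case, $c \in \An_{\g}(A) \cup \An_{\g}(Y)$. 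Assume without loss of generality $c \in \An_{\g}(Y)$ with a directed path $c \to \cdots \to Y$. Concatenating the $A$-to-$c$ segment of $p$ with this directed path and extracting a simple path yields a new back-door path $p'$ on which $c$ appears as a non-collider. Since $W^{\star}$ blocks $p'$, a case analysis on the location of its blocker — on the shared $A$-to-$c$ segment (where it transfers verbatim to a blocker of $p$, contradicting the assumed uniqueness of $c$'s closure as the blocker of $p$ under $W^{\star}$) or on the appended directed portion (forcing some descendant of $c$ into $W^{\star}$, which would have already opened $c$ on $p$) — produces a contradiction.

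\emph{Main obstacle.} The expansion step is the most delicate. The chief subtlety is the bookkeeping around colliders: adding $v$ may simultaneously open several colliders on $p$ (any one of which has $v$ as a descendant), and the alternative blocker of $p'$ identified above must itself be shown robust to the expansion. Rigorously ruling out pathological configurations uses the acyclicity of $\g$ to forbid circular directed paths between $c$, blockers, and $v$. The remaining ingredients — the pruning lemma, the reduction through \cref{prop:backdoor}, and the inclusion $W^{\star} \subseteq \CvdW$ — are routine once the expansion step is in hand.
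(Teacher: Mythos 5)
Your plan takes a genuinely different route from the paper. The paper's proof (following Richardson et al.) argues directly by contradiction: if $\CvdW$ fails, an open back-door path given $\CvdW$ is shown to be an \emph{inducing path} relative to the unobserved vertices, whence by the Verma--Pearl lemma \emph{no} subset of $S$ can d-separate $A$ and $Y$, contradicting \cref{assump:some-subset-S} in one stroke. Your prune-then-expand strategy is workable in principle --- the pruning claim (a separator may be intersected with $\An_{\g}(A)\cup\An_{\g}(Y)$) is a standard true lemma, and the monotonicity statement underlying your expansion step is in fact provable --- but as written both halves have real gaps. In the pruning step, your stated justification does not follow: from ``$W$ blocks $p$ and $u\in\An_{\g}(c)$'' you cannot conclude that $c$ has no descendants in $W$ (nothing prevents $c$ or a descendant of $c$ from lying in $W$). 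The correct argument goes the other way: assume a path is open given $W\cap[\An_{\g}(A)\cup\An_{\g}(Y)]$, note every vertex on it lies in $\An_{\g}(A)\cup\An_{\g}(Y)$ by \cref{lem:an-d-conn}, and conclude it is also open given $W$.

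The more serious gap is in the expansion step, which you rightly flag as the crux. Your contradiction relies on ``the assumed uniqueness of $c$'s closure as the blocker of $p$ under $W^{\star}$,'' but there is no such uniqueness: adding $v$ must simultaneously open \emph{every} closed collider of $p$ (each must have $v$ as a descendant), and the blocker you locate on the retained segment of $p'$ can simply be another closed collider of $p$ --- no contradiction yet. Recursing on that collider sends you to the opposite side of the path (a collider with a directed path to $A$ forces you to retain the $Y$-side segment, and vice versa), so the recursion can oscillate; acyclicity alone does not terminate it. The repair is to choose the splice collider so that the retained segment contains \emph{no} closed colliders: if $v\in\An_{\g}(Y)$, splice at the closed collider nearest $A$ using its directed path to $Y$; if $v\in\An_{\g}(A)$, splice at the one nearest $Y$ using its directed path to $A$. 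One must then also rule out the directed path to $Y$ passing through $A$ (which would make the spliced walk degenerate into a causal path); this uses $Y\notin\An_{\g}(A)$ together with the fact that $v$, being a pre-treatment covariate, is not a descendant of $A$. None of this appears in your sketch, and without it the induction does not close. Given how much machinery the repair requires, the paper's inducing-path argument is considerably shorter and is worth knowing.
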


As pointed out by \citet{richardson2018discussion}, this result
follows almost immediately from a general result on inducing
paths; for completeness, a proof is provided in
\cref{apx:disj-cause}.

In view of the last three propositions, the disjunctive cause
criterion is superior to the
pre-treatment criterion: as a subset of $\Cpre$, $\CvdW$ is
guaranteed to control for confounding under an even weaker
assumption. Both the conjunctive and disjunctive
cause criteria require substantive knowledge about the causes of $A$
and $Y$, though much less demanding than knowing the full
graph. Between the two, the disjunctive cause criterion is usually
preferred because it achieves the primary objective
under a weaker assumption. That being said, when \cref{assump:some-subset-S} fails to hold,
using $\CvdW$ may lead to a larger bias than $\Cpre$; see \citet[\S2.1]{richardson2018discussion}.

While the aforementioned criteria
only require partial structural knowledge, verifying the corresponding
assumptions requires much more information and can be very
difficult. Further, these criteria are formulated with a
pre-specified set of observed variables $S$. Hence they are not
suitable for the design of observational studies where one must choose
a set of covariates for measurement. This is an important limitation
because design is often more consequential than analysis in
observational studies \citep{rubin2008objective}.

To address these issues, \citet{guo2023confounder} recently introduced
a new procedure to select confounders that does not require
pre-specifying the graph or the set $S$. It is based on a
reformulation of the back-door criterion (\cref{prop:backdoor}) under
latent projection. In fact, this procedure inverts the process of
latent projection by iteratively expanding the causal graph as more
and more structural knowledge is elicited from the user. This is
repeated until a sufficient adjustment set is found or it can be
determined that no sufficient adjustment set exists. Compared to the
``static'' criteria $\CvdW$, $ \Ccommon$ and $\Cpre$, this new
procedure is able to elicit structural knowledge in an interactive and
economical way.

\subsection{Methods that assume no structural knowledge}
\label{sec:no-structural}

Without any knowledge about the causal structure, it must be assumed that
the primary objective is \emph{already satisfied} by a given set of
covariates because the ignorability condition \eqref{eq:no-confounding}
cannot be tested using observational data alone. Without loss of generality,
we assume $S$ is a sufficient adjustment set; that is, for this subsection,
we shall proceed under \cref{assump:S-already}.

In this case, data-driven approaches can be useful to achieve some of
the secondary objectives, especially if one would like to reduce the
dimension of $C$. This is typically accomplished by considering two
types of conditional independencies.

\begin{definition}[Treatment/outcome Markov blankets and boundary] \label{def:markov-blanket}
For any $V \subseteq S$, the collection of \emph{treatment Markov blankets} in $V$ is defined as
\begin{equation*}
  \predA(V) \equiv \{V' \subseteq V: A \indep V \setminus V' \mid V'
  \}
\end{equation*}
and the collection of \emph{outcome Markov blankets} in $V$ is defined as
\begin{equation*}
  \predYA(V) \equiv \{V' \subseteq V: Y \indep V \setminus V' \mid A, V'\}.
\end{equation*}
Further, we define the \emph{treatment Markov boundary}
\begin{equation} \label{eqs:PredA}
\PredA(V) \equiv \bigcap_{V' \in \predA(V)} V'
\end{equation}
and the \emph{outcome Markov boundary}
\begin{equation} \label{eqs:PredYA}
\PredYA(V) \equiv \bigcap_{V' \in \predYA(V)} V'.
\end{equation}
\end{definition}

As a convention, conditional independence always holds with an empty
vector, so $V \in \predA(V)$ and $V \in \predYA(V)$.
For the Markov blankets to have desirable properties, we require the positivity of the distribution over $(A,S)$ in the following sense, which warrants the application of the intersection property of conditional independence; see also \citet[\S2.3.5]{studeny2004book}. In below, let $P_A$ be the marginal distribution of $A$ and $P_v$ be the marginal distribution of $v$ for each $v \in S$.

\begin{assumption}[Positivity] \label{assump:positivity}
The distribution over $(A,S)$ admits a density that is almost
everywhere positive with respect to the product measure $P_A \prod_{v
  \in S} P_{v}$.
\end{assumption}

\begin{lemma} \label{lem:close-intersect}
Under \cref{assump:positivity}, $\predA(V)$ and $\predYA(V)$ are closed under intersection for every $V \subseteq S$.
\end{lemma}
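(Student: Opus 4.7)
The plan is to reduce the claim to the intersection axiom of conditional independence, which is valid under \cref{assump:positivity}. Recall that the intersection property states: when the joint distribution is strictly positive (with respect to the product of marginals), $X \indep Y \mid (Z, W)$ together with $X \indep Z \mid (Y, W)$ implies $X \indep (Y, Z) \mid W$. The product-measure formulation of \cref{assump:positivity} is chosen precisely so that the joint distribution of any subcollection of $(A, S)$ admits a strictly positive density with respect to the product of its one-dimensional marginals; this licenses the axiom for every independence statement about $A$ and subsets of $S$, and, by conditioning, also for independence statements about $Y$ given $A$ and subsets of $S$.

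Fix $V \subseteq S$ and suppose $V_1, V_2 \in \predA(V)$. I would partition $V$ into four disjoint pieces, $W = V_1 \cap V_2$, $U_1 = V_1 \setminus W$, $U_2 = V_2 \setminus W$, and $R = V \setminus (V_1 \cup V_2)$, so that the two hypotheses become
\[
A \indep (U_2, R) \mid (W, U_1) \qquad \text{and} \qquad A \indep (U_1, R) \mid (W, U_2).
\]
The goal is $A \indep (U_1, U_2, R) \mid W$, which is exactly the statement $W \in \predA(V)$. First, I would apply weak union to each hypothesis to pull $R$ into the conditioning set, obtaining $A \indep U_2 \mid (W, U_1, R)$ and $A \indep U_1 \mid (W, U_2, R)$; the intersection property then yields $A \indep (U_1, U_2) \mid (W, R)$. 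A further weak union of either original hypothesis gives $A \indep R \mid (W, U_1, U_2)$. A second application of the intersection property, now with the roles $X = (U_1, U_2)$, $Y = R$, $Z = W$, delivers $A \indep (U_1, U_2, R) \mid W$, establishing that $V_1 \cap V_2 \in \predA(V)$.

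The argument for $\predYA(V)$ is completely parallel: replace $A$ by $Y$ and enlarge every conditioning set to include $A$, treating $A$ as a background variable throughout. The two-step application of weak union and intersection goes through verbatim, because the variables being combined in each invocation are still subsets of $S$, and the conditional law of those subsets given $A$ inherits strict positivity from \cref{assump:positivity}. The main obstacle, and the only place where care is needed, is verifying that \cref{assump:positivity} really does supply the positivity required by the intersection axiom in both invocations of both arguments; this is precisely why the assumption is stated in the product-measure form. No structural or graphical input is needed for the lemma itself.
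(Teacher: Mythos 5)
Your proof is correct and follows essentially the same route as the paper: both arguments partition $V$ into $V_1\cap V_2$, the two set differences, and the remainder, and both pivot on the intersection axiom under \cref{assump:positivity} (the paper finishes with a contraction step where you use a second application of intersection, and it states the result once for a generic conditioning variable $Z$ so that the $\predA$ and $\predYA$ cases are covered simultaneously, but these are cosmetic differences). Your attention to where positivity is invoked matches the paper's treatment.
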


Hence, under positivity $\PredA(V)$ and $\PredYA(V)$ are the
\emph{minimum} of $\predA(V)$ and $\predYA(V)$,
respectively.
Moreover, the following result suggests that the order of reduction
within $\predA(V)$ and $\predYA(V)$ is irrelevant.

\begin{lemma}[Reduction] \label{lem:reduce}
Under \cref{assump:positivity}, we have
\[\PredA(V) = \PredA(V'), \quad \PredYA(V) = \PredYA(V'') \]
for every $V' \in \predA(V)$, $V'' \in \predYA(V)$ and $V \subseteq S$.
\end{lemma}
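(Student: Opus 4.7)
The plan is to reduce the lemma to the structural identity
\begin{equation*}
\predA(V') \;=\; \{W \in \predA(V) : W \subseteq V'\},
\end{equation*}
(and its outcome analogue), from which $\PredA(V) = \PredA(V')$ follows by two-way inclusion once we know, via \cref{lem:close-intersect}, that the set-theoretic intersections defining $\PredA(V)$ and $\PredA(V')$ are themselves attained as Markov blankets. Explicitly, $\PredA(V) \in \predA(V)$, and because $V' \in \predA(V)$ is one of the sets being intersected, $\PredA(V) \subseteq V'$; so once the identity holds, $\PredA(V) \in \predA(V')$, forcing $\PredA(V') \subseteq \PredA(V)$ by minimality, while conversely $\PredA(V') \in \predA(V') \subseteq \predA(V)$ forces $\PredA(V) \subseteq \PredA(V')$.

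The identity itself I would establish by two short applications of the semi-graphoid axioms. For the $\supseteq$ direction, if $W \in \predA(V)$ with $W \subseteq V'$, then $V' \setminus W \subseteq V \setminus W$, so decomposition applied to $A \indep V \setminus W \mid W$ yields $A \indep V' \setminus W \mid W$, i.e., $W \in \predA(V')$. For the $\subseteq$ direction, given $W \in \predA(V')$, I would combine $A \indep V' \setminus W \mid W$ with the hypothesis $A \indep V \setminus V' \mid V'$; since $W \subseteq V'$ we have $V' = W \cup (V' \setminus W)$, and the second independence reads $A \indep V \setminus V' \mid W \cup (V' \setminus W)$. Contraction then delivers $A \indep (V' \setminus W) \cup (V \setminus V') \mid W$, which equals $A \indep V \setminus W \mid W$ (again using $W \subseteq V'$), so $W \in \predA(V)$.

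The outcome-Markov-blanket case is identical after augmenting every conditioning set with the variable $A$: decomposition and contraction are applied to independences of the form $Y \indep \,\cdot\, \mid A, \,\cdot\,$, and the same set algebra carries through. The only place where positivity genuinely enters is the appeal to \cref{lem:close-intersect} needed to ensure $\PredA(V)$ and $\PredA(V')$ are actual Markov blankets rather than mere intersections; I do not foresee any real obstacle beyond stating this carefully, as everything else reduces to routine semi-graphoid manipulation.
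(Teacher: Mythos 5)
Your proposal is correct and takes essentially the same route as the paper's proof: the $\subseteq$ direction of your structural identity is the paper's contraction step establishing $\predA(V') \subseteq \predA(V)$, the $\supseteq$ direction applied to $W = \PredA(V)$ is the paper's decomposition/weak-union step establishing $\PredA(V) \in \predA(V')$, and both arguments rely on \cref{lem:close-intersect} to identify the Markov boundary as an actual Markov blanket. Nothing further is needed.
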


Given that $S$ is a sufficient adjustment set, the following two key lemmas state that every treatment or outcome Markov blanket in $S$ is also a sufficient
adjustment set. The proof of these lemmas applies graphoid properties  of
conditional independence listed in \cref{apx:graphoid} and is deferred
to \cref{apx:blanket}.

\begin{lemma}[Soundness of $\predA$] \label{lem:predA}
Under \cref{assump:S-already}, every set in $\predA(S)$ controls for confounding.
\end{lemma}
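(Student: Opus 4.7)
The plan is to deduce the result directly from the graphoid axioms, since both the hypothesis and the conclusion are conditional independence statements and no graphical structure has been posited for the proof. Fix any $V' \in \predA(S)$ and write $W \defeq S \setminus V'$, so that by the definition of a treatment Markov blanket we have $A \indep W \mid V'$. By \cref{assump:S-already}, we also have $Y_a \indep A \mid S$, i.e.\ $Y_a \indep A \mid V', W$, for $a = 0,1$.

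The goal is then $Y_a \indep A \mid V'$. The key observation is that these two premises have exactly the form to which the contraction axiom applies: taking the ``target'' variable to be $A$, the inner conditioning set to be $V'$, and combining $A \indep W \mid V'$ with $A \indep Y_a \mid V', W$, contraction yields $A \indep (W, Y_a) \mid V'$. A single application of decomposition then strips off $W$ to give $A \indep Y_a \mid V'$, which is the desired ignorability relative to $V'$.

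Two remarks on potential pitfalls. First, the semi-graphoid axioms (symmetry, decomposition, weak union, contraction) suffice here; we do not need the intersection property, so \cref{assump:positivity} is not required for this particular lemma (which is consistent with the fact that the lemma is stated without invoking positivity). Second, care is needed because $Y_a$ is a potential outcome rather than an observable variable, but the conditional independence statements involving $Y_a$ are all assumed at the level of the joint distribution of $(Y_0, Y_1, A, S)$, so the graphoid axioms apply to them verbatim, just as they do in the cited appendix \cref{apx:graphoid}. There is no real obstacle: the argument is essentially a one-line contraction-plus-decomposition, and the only thing to verify is that the two premises are indexed with the correct conditioning sets $V'$ and $V' \cup W = S$.
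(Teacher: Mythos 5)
Your proof is correct and is essentially identical to the paper's: both apply contraction to $A \indep S \setminus V' \mid V'$ and $A \indep Y_a \mid S$ to obtain $A \indep Y_a,\, S\setminus V' \mid V'$, then strip off $S \setminus V'$ by decomposition. Your side remarks (that intersection/positivity is not needed, and that the graphoid axioms apply to the joint law including $Y_a$) are also consistent with how the paper states and proves the lemma.
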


\begin{lemma}[Soundness of $\predYA$] \label{lem:predYA}
Under \cref{assump:S-already,assump:positivity}, every set in $\mathfrak{R}_{Y|A}(S)$ controls for confounding.
\end{lemma}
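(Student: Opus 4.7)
The plan is to reduce the target $Y_a \indep A \mid V'$ to the hypothesized $Y_a \indep A \mid S$ from \cref{assump:S-already} by showing that $P(Y_a \in \cdot \mid S)$ already depends on $S$ only through its $V'$-coordinate; once that reduction is in place, a single tower-property step integrates out $S \setminus V'$ and delivers the claim.

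Fix $a \in \{0,1\}$ and any $V' \in \predYA(S)$, so that $Y \indep (S \setminus V') \mid A, V'$ by definition. For a measurable set $B$, set $f_a(S) \defeq P(Y_a \in B \mid S)$; by \cref{assump:S-already} this equals $P(Y_a \in B \mid A, S)$, so it is a function of $S$ alone. Next I would specialize to the event $\{A = a\}$, which has positive probability under \cref{assump:positivity}, and invoke consistency ($Y = Y_a$ on $\{A = a\}$) to obtain $f_a(S) = P(Y \in B \mid A = a, S)$ almost surely. The outcome Markov blanket property then collapses this to $h_a(V') \defeq P(Y \in B \mid A = a, V')$, so $f_a$ depends on $S$ only through $V'$. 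Putting the pieces together,
\[
P(Y_a \in B \mid A, V') = \E\!\left[ P(Y_a \in B \mid A, S) \,\middle|\, A, V' \right] = \E[h_a(V') \mid A, V'] = h_a(V'),
\]
which is free of $A$. Since $B$ was arbitrary, $Y_a \indep A \mid V'$, as required.

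The most delicate step is the passage from the observational independence involving the factual $Y$ to an analogous statement about the counterfactual $Y_a$: the Markov blanket condition only constrains $Y$, so the argument must bridge this gap via the restriction-and-consistency maneuver above. Here \cref{assump:positivity} is doing real work: it ensures both that $\{A = a\}$ has positive probability and that overlap $P(A = a \mid S) > 0$ holds on a set of full $P_S$-measure, which is exactly what lets the almost-sure identification $f_a(S) = h_a(V')$ feed correctly into the tower step. The remaining manipulations are standard graphoid-style rewrites of conditional expectations and parallel the proof of \cref{lem:predA}, which, by contrast, conditions only on covariates and therefore does not require \cref{assump:positivity}.
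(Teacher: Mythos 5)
Your proof is correct, and it rests on the same essential mechanism as the paper's: restrict to the event $\{A=a\}$, use consistency to trade $Y$ for $Y_a$ there, and use positivity to propagate the resulting identity from $P_{S\mid A=a}$-almost-everywhere to $P_S$-almost-everywhere. The packaging, however, is genuinely different. The paper stays at the level of graphoid axioms: it turns the blanket condition into $Y_a \indep S\setminus S_{Y|A}\mid A=a,\, S_{Y|A}$, combines this with $Y_a \indep \{A=a\}\mid S$ via the \emph{intersection} property (this is where \cref{assump:positivity} enters), applies decomposition, and finally upgrades $Y_a\indep\{A=a\}\mid S_{Y|A}$ to $Y_a\indep A\mid S_{Y|A}$ using that $A$ is binary. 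Your tower-property computation is in effect an in-line, measure-theoretic proof of exactly that intersection step: you show that $P(Y_a\in B\mid A,S)$ admits a $\sigma(V')$-measurable version and then condition down to $\sigma(A,V')$. What your route buys is transparency about where each hypothesis acts, and it lands directly on $Y_a\indep A\mid V'$ --- the conditional law given $(A,V')$ comes out $\sigma(V')$-measurable for every value of $A$, so the final binary-$A$ equivalence of the paper's proof never appears (the binariness is instead absorbed into reading \cref{assump:S-already} as the full independence $Y_a\indep A\mid S$, consistent with \cref{def:control}). What it costs is that the almost-sure bookkeeping must be done by hand: the identity $f_a(S)=h_a(V')$ is first obtained only $P_{S\mid A=a}$-a.e., and one needs the equivalence $P_{S\mid A=a}\sim P_S$ supplied by \cref{assump:positivity} before feeding it into the tower step --- you correctly single this out as the delicate point. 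Your closing observation that \cref{lem:predA} requires no positivity also matches the paper.
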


\begin{remark}[Binary treatment]
That $A$ is binary is necessary for \cref{lem:predYA} to hold. When $A$ has more than two levels, under the same assumptions, for every $S_{Y|A} \in \predYA(S)$ one can still show
\begin{equation} \label{eqs:weak-control}
Y_a \indep \{A=a\} \mid S_{Y|A}
\end{equation}
for every level $a$. This is weaker than $Y_a \indep A \mid
S_{Y|A}$. Nevertheless, equation \eqref{eqs:weak-control} still
ensures the identification of the distribution of $Y_a$:
\begin{align*}
  \mathbb{P} (Y_a = y) &= \E[\mathbb{P}(Y_a = y \mid S_{Y|A})]\\
                       &= \E[\mathbb{P}(Y_a = y \mid A=a, S_{Y|A})] \\
                       &= \E[\mathbb{P}(Y = y \mid A=a, S_{Y|A})].
\end{align*}
However, \eqref{eqs:weak-control} is not sufficient to identify
$\E[Y_a \mid A = a']$ for $a \neq a'$ (required for identifying the
treatment effect on the treated) when $A$ has more than two levels.
\end{remark}

We can use $\predA$ and $\predYA$ to find subsets of $S$ that control for confounding.
There are at least four possible ways to do this:
\begin{description}
\item[Conjunctive] $\Cconj(S) \equiv \PredA(S) \cap \PredYA(S)$,
\item[Disjunctive] $\Cdisj(S) \equiv \PredA(S) \cup \PredYA(S)$,
\item[$\AY$-sequential] $\CAY(S) \equiv \PredYA(\PredA(S))$,
\item[$\YA$-sequential] $\CYA(S) \equiv \PredA(\PredYA(S))$.
\end{description}
More generally, the Markov boundaries $\PredA$ and $\PredYA$ above can be replaced by any set from the corresponding collection of Markov blankets $\predA$ and $\predYA$. For example, the $\YA$-sequential rule can take any set from $\predA(S_{Y|A})$ for any $S_{Y|A} \in \predYA(S)$.

\begin{example} \label{ex:four-sel-M}
Consider \Cref{fig:sel}(a) with $S = \{U_1,U_2,L\}$. We have
$\PredA(S) = \{U_1\}$, $\PredYA(S) = \{U_2\}$ and hence
\[ \Cconj(S) = \emptyset, \quad \Cdisj(S) = \{U_1, U_2\}. \]
Further, we have $\CAY(S) = \PredYA(\{U_1\}) = \emptyset$ and $\CYA(S) = \PredA(\{U_2\}) = \emptyset$.
\end{example}

\begin{example} \label{ex:four-sel}
Consider the case of \Cref{fig:sel}(b) with $S = \{X_1, X_2\}$. By d-separation,
$A \indep X_2 \mid X_1$ and $Y \indep X_1 \mid A, X_2$, so
\begin{align*}
  \predA(S) &= \{\{X_1\}, \{X_1,X_2\}\}, \quad \PredA(S) = \{X_1\},\\
  \predYA(S) &= \{\{X_2\}, \{X_1,X_2\}\}, \quad \PredYA(S) = \{X_2\}.
\end{align*}
It follows that $\Cconj(S) = \emptyset$, $\Cdisj(S) = \{X_1, X_2\}$, $\CAY(S) = \{X_1\}$ and $\CYA(S) = \{X_2\}$, among which only $\Cconj(S)$ fails to control for confounding.
\end{example}

Among these four methods, it is clear from \cref{ex:four-sel} that the conjunctive criterion $\Cconj(S)$ can fail to control for confounding, even when $S \cup \{A,Y\}$ is causally closed. The
validity of the other three criteria directly follows from
\Cref{lem:predA,lem:predYA} and applying the weak union property of conditional independence.

\begin{theorem} \label{thm:four-sel}
  Under \cref{assump:S-already,assump:positivity}, $\CAY(S)$, $\CYA(S)$ and $\Cdisj(S)$ are sufficient adjustment sets.
\end{theorem}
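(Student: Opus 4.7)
The plan is to reduce each of the three cases to an application of the soundness lemmas \ref{lem:predA} and \ref{lem:predYA}, by exhibiting the candidate set as a treatment or outcome Markov blanket of a set that has already been verified to control for confounding. Aside from the disjunctive case, everything is essentially a bootstrapping exercise that layers the two lemmas.

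For the sequential selectors I would iterate the lemmas. Since $\PredA(S) \in \predA(S)$, Lemma \ref{lem:predA} gives that $\PredA(S)$ is a sufficient adjustment set, and Assumption \ref{assump:positivity} is inherited by $(A, \PredA(S))$ because the marginal density with respect to $P_A \prod_{v \in \PredA(S)} P_v$ is a positive integral of the original density over the dropped coordinates. Hence Assumptions \ref{assump:S-already} and \ref{assump:positivity} hold with $\PredA(S)$ in the role of $S$, and Lemma \ref{lem:predYA} delivers that $\PredYA(\PredA(S)) = \CAY(S)$ controls for confounding. The case of $\CYA(S)$ is symmetric: first invoke Lemma \ref{lem:predYA} to promote $\PredYA(S)$ to a sufficient adjustment set, then apply Lemma \ref{lem:predA} to $\predA(\PredYA(S))$.

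For the disjunctive selector $\Cdisj(S) = \PredA(S) \cup \PredYA(S)$, the idea is to show $\Cdisj(S) \in \predYA(S)$ and then invoke Lemma \ref{lem:predYA} directly. Starting from the defining relation
\[
Y \indep S \setminus \PredYA(S) \,\big|\, A,\, \PredYA(S),
\]
I would decompose the left-hand set as the disjoint union of $S \setminus \Cdisj(S)$ and $\PredA(S) \setminus \PredYA(S)$ (both contained in $S \setminus \PredYA(S)$, and every element of $S \setminus \PredYA(S)$ lies in exactly one of them). Moving the second piece into the conditioning set via the weak union property yields
\[
Y \indep S \setminus \Cdisj(S) \,\big|\, A,\, \PredYA(S) \cup \bigl(\PredA(S) \setminus \PredYA(S)\bigr) = A,\, \Cdisj(S),
\]
which is precisely membership in $\predYA(S)$. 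The main obstacle I anticipate is the careful set-bookkeeping in this step: verifying the partition is correct and that the conditioning set produced by weak union is exactly $\Cdisj(S)$. Once this membership is established, Lemma \ref{lem:predYA} closes the argument.
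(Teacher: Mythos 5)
Your proposal is correct and follows essentially the same route as the paper: sequential application of \cref{lem:predA,lem:predYA} for $\CAY(S)$ and $\CYA(S)$, and a weak-union argument showing that the disjunctive set is itself a Markov blanket, followed by the corresponding soundness lemma. The only (immaterial) difference is that for $\Cdisj(S)$ the paper shows $\PredA(S)\cup\PredYA(S)\in\predA(S)$ and invokes \cref{lem:predA}, whereas you show membership in $\predYA(S)$ and invoke \cref{lem:predYA}; both are valid under the stated assumptions.
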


For the rest of this section, suppose \cref{assump:positivity} (positivity)
holds. By applying the soundness of $\predA$ and $\predYA$
(\Cref{lem:predA,lem:predYA}) sequentially, one can show that
$\PredA(\CAY)$, $\PredYA(\PredA(\CAY))$ and so forth are sufficient
adjustment sets too.
Furthermore, consider alternating between $\PredA$ and $\PredYA$
and let $\CAY^{\ast}(S)$ and $\CYA^{\ast}(S)$ be, respectively, the
limit of
\[ S \xrightarrow{\PredA} C_{\text{A}} \xrightarrow{\PredYA} C_{\text{AY}} \xrightarrow{\PredA} C_{\text{AYA}} \xrightarrow{\PredYA} \dots \]
and
\[ S \xrightarrow{\PredYA} C_{\text{Y}} \xrightarrow{\PredA}
  C_{\text{YA}} \xrightarrow{\PredYA} C_{\text{YAY}}
  \xrightarrow{\PredA} \dots. \]
The iterations terminate in a finite number of steps, when $|\predA|=1$ or $|\predYA|=1$.
Then $\CAY^{\ast}(S)$ and $\CYA^{\ast}(S)$ are also sufficient
adjustment sets and cannot be further reduced using these operations.

\begin{lemma}[Stability] \label{lem:stability}
Under \cref{assump:positivity}, for $C^{\ast}\in \{\CAY^{\ast}(S), \CYA^{\ast}(S)\}$, we have $\PredA(C^{\ast}) = \PredYA(C^{\ast}) = C^{\ast}$.
\end{lemma}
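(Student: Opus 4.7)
My plan is to reduce the claim to a short monotonicity-plus-finiteness argument that exploits the alternating structure of the iteration. By symmetry (swap the roles of $\PredA$ and $\PredYA$), it suffices to treat $C^{\ast} = \CAY^{\ast}(S)$. I would set $V_0 = S$ and, for $k \geq 0$, $V_{2k+1} = \PredA(V_{2k})$ and $V_{2k+2} = \PredYA(V_{2k+1})$, so that $C^{\ast}$ is the eventual common value of the sequence $\{V_n\}$.

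The first step is to observe monotonicity. By definition, both $\predA(V)$ and $\predYA(V)$ consist of subsets of $V$, so their intersections $\PredA(V)$ and $\PredYA(V)$ also lie inside $V$. Hence $V_0 \supseteq V_1 \supseteq V_2 \supseteq \cdots$ is a descending chain in the power set of the finite set $S$, which must stabilize from some index $N$ onward; that stationary value is $C^{\ast}$.

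The second step is to read off both fixed-point equalities from the alternation. Each transition $V_n \to V_{n+1}$ is the application of either $\PredA$ (at odd $n+1$) or $\PredYA$ (at even $n+1 \geq 2$), and the two operators alternate, so any two consecutive transitions past index $N$ invoke each operator at least once. Since $V_n = V_{n+1} = C^{\ast}$ for $n \geq N$, evaluating at $n = N$ and $n = N+1$ immediately yields $\PredA(C^{\ast}) = C^{\ast}$ and $\PredYA(C^{\ast}) = C^{\ast}$ simultaneously.

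I do not anticipate any real obstacle; the argument is essentially structural bookkeeping once well-definedness of the iteration is in place. The role of \cref{assump:positivity} (via \cref{lem:close-intersect}) is merely to ensure that $\PredA(V)$ and $\PredYA(V)$ are themselves members of their respective Markov blanket collections, so that the iteration produces a well-defined descending chain to which the finiteness argument applies.
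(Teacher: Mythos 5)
Your proof is correct, but it takes a genuinely different route from the paper's. You run the \emph{unstopped} infinite alternating sequence and observe that, since $V$ itself always belongs to $\predA(V)$ and $\predYA(V)$, both boundaries satisfy $\PredA(V)\subseteq V$ and $\PredYA(V)\subseteq V$, so the iterates form a descending chain in the finite power set of $S$; once the chain is stationary, the alternation delivers both fixed-point identities simultaneously. This is more elementary than the paper's argument: it uses no graphoid axioms and, beyond well-definedness, does not really lean on \cref{assump:positivity}. The paper instead reasons at the stopping point of the iteration, where the termination rule (stop when $|\predA|=1$ or $|\predYA|=1$) hands over only \emph{one} of the two identities; the other is recovered by a contraction argument showing that if $C^{\ast}=\PredYA(\tilde{C})$ for the penultimate set $\tilde{C}$, then $\PredYA(C^{\ast})=C^{\ast}$ --- essentially the idempotence implicit in \cref{lem:reduce}. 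The one point your write-up leaves implicit is the reconciliation of the two readings of $C^{\ast}$: the paper's proof takes $C^{\ast}$ to be the value at the early-stopping rule, whereas yours takes it to be the eventual constant value of the unstopped chain. These coincide, because the early-stopping value is fixed by the operator that produced it (idempotence via \cref{lem:reduce}, which does use \cref{assump:positivity}) and by the next operator (the stopping rule), so the unstopped chain is constant from that point on; a sentence to this effect would close the loop, since otherwise one could worry that your limit is a strictly smaller set than the one the paper calls $\CAY^{\ast}(S)$.
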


\begin{example}[continues=ex:four-sel-M]
$\CAY^{\ast}(S) = \CYA^{\ast}(S) = \emptyset$ in  \Cref{fig:sel}(a).
\end{example}

\begin{example}[continues=ex:four-sel]
$\CAY^{\ast}(S) = \{X_1\}$, $\CYA^{\ast}(S) = \{X_2\}$ in  \Cref{fig:sel}(b).
\end{example}

In fact, from a graphical perspective, $\CAY^{\ast}(S)$ and $\CYA^{\ast}(S)$ are \emph{minimal}
sufficient adjustment sets in the sense that no proper subset can
control for confounding, or in view of \cref{prop:backdoor}, can block
all the back-door paths between $A$ and $Y$.

\begin{theorem}[Minimal sufficient adjustment sets] \label{thm:minimal}
Suppose \cref{assump:S-already,assump:positivity} hold. The set $C^{\ast}
\in \{\CAY^{\ast}(S)$, $ \CYA^{\ast}(S)\}$ controls for confounding.

Further, let $\g$ be a causal DAG over
variables $Z \cup \{A,Y\}$. For $a=0,1$, suppose
the distribution of $(A, Y_a, Z)$ is Markov to and faithful to the
SWIG $\g(a)$. Then, for $C^{\ast} \in \{\CAY^{\ast}(S),
\CYA^{\ast}(S)\}$ there is no proper subset of $C^{\ast}$ that also
controls for confounding.
 \end{theorem}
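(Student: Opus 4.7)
The plan splits the theorem into its two assertions and handles them separately.

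For sufficiency of $C^*$ (the first assertion): I induct on the iterative construction. By \cref{assump:S-already}, the starting set $S$ controls for confounding. Under \cref{assump:positivity}, \cref{lem:close-intersect} guarantees $\PredA(V) \in \predA(V)$ and $\PredYA(V) \in \predYA(V)$, so \cref{lem:predA,lem:predYA} imply that each iterate remains a sufficient adjustment set. The iterates form a nested decreasing chain inside the finite set $S$, so the iteration stabilizes after finitely many steps at $\CAY^*(S)$ (and similarly $\CYA^*(S)$), which is therefore a sufficient adjustment set.

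For minimality (the second assertion): I argue by contradiction. Suppose some proper subset $C' \subsetneq C^*$ controls for confounding and pick any $X \in C^* \setminus C'$. By \cref{lem:stability}, $A \not\indep X \mid C^* \setminus \{X\}$ and $Y \not\indep X \mid A, C^* \setminus \{X\}$; faithfulness to the SWIG $\mathcal{G}(a)$ upgrades these into d-connecting paths $\pi_1$ (from $A$ to $X$) and $\pi_2$ (from $X$ to $Y$) in $\mathcal{G}$ under the stated conditioning sets. Since $X$ and all elements of $C^* \setminus \{X\} \subseteq S$ are non-descendants of $A$, the path $\pi_1$ cannot begin with an edge outgoing from $A$: otherwise some collider along $\pi_1$ would be a descendant of $A$, which a pre-treatment conditioning set cannot activate. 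Hence $\pi_1$ is a back-door path. The plan is then to splice $\pi_1$ with $\pi_2$ at $X$, producing a back-door path from $A$ to $Y$ which I claim is d-connecting given $C' \subseteq C^* \setminus \{X\}$, contradicting the back-door criterion (\cref{prop:backdoor}) applied to $C'$.

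The main technical obstacle lies in this concatenation step. If both $\pi_1$ and $\pi_2$ meet $X$ head-to-head, then $X$ becomes a collider on the combined path that is not activated by $C'$ unless a descendant of $X$ lies in $C'$; moreover, colliders elsewhere on the $\pi_i$'s that were activated by elements of $C^* \setminus \{X\}$ outside $C'$ may also fail to be activated given the smaller conditioning set. Handling this requires either a careful selection of $\pi_1,\pi_2$ exploiting the richness of d-connecting paths guaranteed by faithfulness, or—more cleanly—an appeal to the graphical characterization that a stable sufficient adjustment set is a minimal $d$-separator of $A$ and $Y$ in the back-door graph $\mathcal{G}^{\bar A}$, from which the minimality statement follows directly.
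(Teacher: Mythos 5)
Your first assertion (sufficiency of $C^{\ast}$) is argued exactly as in the paper: start from \cref{assump:S-already} and iterate \cref{lem:predA,lem:predYA}; no issues there.

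For minimality, however, there is a genuine gap, and you have in fact put your finger on it yourself. The splicing of $\pi_1$ and $\pi_2$ at $X$ does not go through for two reasons: (i) $X$ may be a collider at the junction and $X \notin C'$, so the concatenated path is blocked there; (ii) more fundamentally, $\pi_1$ and $\pi_2$ are d-connecting given $C^{\ast}\setminus\{X\}$, whereas you need d-connection given the possibly much smaller set $C'$, and shrinking the conditioning set can deactivate colliders elsewhere on the paths. Your proposed escape---appealing to ``the graphical characterization that a stable sufficient adjustment set is a minimal d-separator''---is essentially circular: what \cref{lem:stability} delivers is only that no \emph{single} element of $C^{\ast}$ can be dropped, and the paper explicitly notes (crediting de Luna et al.) that this element-wise irremovability is strictly weaker than the claim that no proper \emph{subset} suffices; bridging that gap is the entire content of the second assertion. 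The paper's actual route is different and avoids path surgery altogether: it proves a strengthened weak transitivity property for d-separation (\cref{thm:transitivity}), namely that $x \indep_{\g} y \mid W$ and $x \indep_{\g} y \mid W, Z$ imply $z \indep_{\g} x \mid W$ or $z \indep_{\g} y \mid W$ for \emph{some} $z \in Z$. Applying this with $W = C'$ and $Z = C^{\ast}\setminus C'$, then using composition and weak union, one shows $C' \cup \{s\}$ still controls for confounding for some $s$; iterating grows $C'$ up to $\tilde{C} = C^{\ast}\setminus\{\tilde{s}\}$, and a final application of weak transitivity to $\tilde{C}$ versus $C^{\ast}$ contradicts \cref{lem:stability} via \cref{lem:reduce} (with consistency used to pass between $Y_a$ and $Y$, a translation your sketch also glosses over). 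To complete your argument you would need to supply something equivalent to \cref{thm:transitivity}, or a properly adapted minimal-d-separator result, rather than assert it.
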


The soundness of $\CAY^{\ast}(S)$ and $\CYA^{\ast}(S)$ does
not rely on the assumption of a causal graph; nor is structural
knowledge of the graph required for computing $\CAY^{\ast}(S)$ and
$\CYA^{\ast}(S)$. In relation to the results above, see also
\citet[\S4.2]{de2011covariate}, who used weak transitivity to show
that no single element can be removed from $\CAY^{\ast}(S)$ or $\CYA^{\ast}(S)$ such that the remaining set continues to control for confounding. Yet, that result is weaker than \cref{thm:minimal} as it does not rule out an even smaller subset controlling for confounding. To show  \cref{thm:minimal}, we prove a strengthened version of weak transitivity; see \cref{apx:minimal}.

To compute the Markov boundaries, a principled approach can
be developed using the following result.

\begin{proposition} \label{prop:coordinate-wise}
  Under \cref{assump:positivity}, for $V \subseteq S$, it holds that
  \begin{align*}
    \PredA(V) &= \{v \in V: A \not \indep v \mid V \setminus \{v\} \}, \\
    \PredYA(V) &= \{v \in V: Y \not \indep v \mid A, \, V \setminus \{v\} \}.
  \end{align*}
\end{proposition}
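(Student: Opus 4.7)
My plan is to prove both identities by double inclusion, handling the $\PredA$ case carefully and then noting that the $\PredYA$ case is obtained by the identical argument with $A$ permanently adjoined to the conditioning set. Let $M(V) \defeq \{v \in V : A \not\indep v \mid V \setminus \{v\}\}$; I want to show $\PredA(V) = M(V)$.

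For the inclusion $\PredA(V) \subseteq M(V)$, I will argue by contrapositive. Suppose $v \in V$ satisfies $A \indep v \mid V \setminus \{v\}$. Observing that $V \setminus (V \setminus \{v\}) = \{v\}$, this conditional independence is exactly the statement that $V \setminus \{v\} \in \predA(V)$. Since $\PredA(V)$ is the intersection over $\predA(V)$ and $v \notin V \setminus \{v\}$, it follows that $v \notin \PredA(V)$. Note that this direction does not require positivity; it is a direct consequence of the definitions.

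For the reverse inclusion $M(V) \subseteq \PredA(V)$, I proceed again by contrapositive: suppose $v \notin \PredA(V)$. By the definition of the Markov boundary as the intersection $\bigcap_{V' \in \predA(V)} V'$, there must exist some $V' \in \predA(V)$ with $v \notin V'$, so that $\{v\} \subseteq V \setminus V'$. From $A \indep V \setminus V' \mid V'$ and the decomposition $V \setminus V' = \{v\} \cup \bigl((V \setminus V') \setminus \{v\}\bigr)$, the weak union graphoid property yields
\[
A \indep v \mid V' \cup \bigl((V \setminus V') \setminus \{v\}\bigr) \;=\; V \setminus \{v\},
\]
so $v \notin M(V)$. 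The positivity assumption is used only insofar as \cref{lem:close-intersect} guarantees that $\PredA(V)$ itself lies in $\predA(V)$, which ensures the characterization refers to a genuine minimum; the proof above, however, relies solely on the general graphoid axioms of decomposition and weak union, both of which hold without positivity.

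The same argument, with the conditioning set $V'$ replaced by $V' \cup \{A\}$ throughout and $A$ playing no role other than being a fixed element of every conditioning set, yields the analogous identity for $\PredYA(V)$. I anticipate no real obstacle; the only subtlety is making sure the weak-union step correctly isolates the single coordinate $\{v\}$, which is handled by the decomposition $V \setminus V' = \{v\} \sqcup \bigl((V \setminus V') \setminus \{v\}\bigr)$ displayed above.
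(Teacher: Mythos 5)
Your proof is correct, and it takes a genuinely lighter route than the paper's. Both arguments prove the two inclusions, and both use weak union for the direction $\{v : A \not\indep v \mid V\setminus\{v\}\} \subseteq \PredA(V)$; but where the paper starts that direction from the statement $A \indep V \setminus \PredA(V) \mid \PredA(V)$ (i.e., from $\PredA(V) \in \predA(V)$, which itself requires positivity via \cref{lem:close-intersect}), you instead extract an arbitrary blanket $V' \in \predA(V)$ omitting $v$ directly from the definition of the intersection in \cref{eqs:PredA}, which needs no positivity. The difference is starker in the other direction: the paper proves $\PredA(V) \subseteq \{v : A \not\indep v \mid V\setminus\{v\}\}$ by invoking the intersection graphoid property (hence \cref{assump:positivity}) to derive a contradiction with minimality, whereas you simply observe that $A \indep v \mid V\setminus\{v\}$ says precisely that $V\setminus\{v\} \in \predA(V)$, so the intersection excludes $v$. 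The upshot is that your argument establishes the stated identity using only decomposition and weak union, showing it holds even without \cref{assump:positivity}; what positivity buys (and what the paper's phrasing implicitly leans on) is only the stronger fact that $\PredA(V)$ is itself a member of $\predA(V)$, i.e., a genuine minimum rather than a bare intersection. Your reduction of the $\PredYA$ case by adjoining $A$ to every conditioning set matches the paper's treatment and is fine.
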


See \cref{apx:blanket} for its proof. As an alternative to \Cref{prop:coordinate-wise}, by \Cref{lem:reduce},
we can also perform the conditional
independence tests sequentially and only use the already reduced set
as $V$. This leads to a backward stepwise procedure in \Cref{alg:Pred}
to compute $\PredA(V)$ and $\PredYA(V)$; see also
\citet[\S4]{vanderweele2011new}.

\begin{algorithm}[!htb] \DontPrintSemicolon \SetNoFillComment \caption{Backward stepwise algorithm for finding the treatment/outcome Markov boundary. } \label{alg:Pred} \KwIn{$k$ (0 for $\PredA$, 1 for $\PredYA$), a list of variables $V$ }
\KwOut{$\PredA(V)$ ($k=0$) or $\PredYA(V)$ ($k=1$)}
$R \gets \{\}$ \;
\While{$V$ is not empty}{
$v \gets \text{the first element of } V$ \;
Remove $v$ from $V$ \;
        \If{$(k=0$ and $A \not\indep v \mid V, R)$ or $(k=1$ and $Y \not\indep v \mid A, V, R)$} {
                $R \gets R \cup \{v\}$ \;
        }
}
\Return{$R$}\;
\end{algorithm}

For practical use, one must employ a parametric or nonparametric test for conditional independence.
As a caveat, if the statistical test used has low power to detect conditional dependence (i.e., claiming false conditional independencies),
the Markov boundaries selected by the aforementioned methods will be too small and not
sufficient to control for confounding. The same issue underlies many other constraint-based structure learning methods as well \citep{drton2017structure,strobl2019estimating}.
For estimating the causal effect with a data-driven set of confounders, the quality of effect estimation depends on the estimator, the set $S$, the statistical procedure for selecting $S$, as well as the underlying data generating mechanism. The reader is referred to \citet{witte2019covariate} for simulation studies on these aspects.

\section{Discussion}
\label{sec:discussion}

Confounder selection is a fundamental problem in causal
inference, and the backdoor criterion solves this problem when
full structural knowledge is available. In practice, however, we
almost always have only partial knowledge of the causal
structure. This is when different
principles and objectives of confounder selection clash the
most. More theoretical and
methodological development is needed when only partial
knowledge of the causal structure is available.

Compared to the widely used conjunctive cause and pre-treatment
criteria, the disjunctive cause criterion achieves the primary
objective of confounding control under the weakest possible
assumption. However, as argued by
\citet[\S2.1]{richardson2018discussion}, the appeal of the disjunctive
cause criterion diminishes when we consider the secondary objectives
that arise in practical situations.

In view of the propositions in \Cref{sec:partial-structural}, the
conjunctive/common cause criterion appears to require the strongest
assumption when a set $S$ of observed or potentially observable
covariates is given. Nor is taking the conjunction useful in the context of
data-driven variable selection (\Cref{ex:four-sel}). Thus, there is
little theoretical support for the conclusions in \citet{austin2011}
as described in the introduction. However, as
mentioned in \cref{sec:partial-structural}, this may not be how
confounder selection is or should be conceptualized in the design of
observational studies. In that process, investigators try to
directly come up with potential confounders, usually through arguing
about common causes, before reviewing whether or how those confounders
can be measured.

Although the conjunctive/common cause heuristic has little value in
\emph{analyzing} observational studies, we believe that they are still
indispensible in \emph{designing} observational studies. This can be
seen from the iterative graph expansion procedure in
\citet{guo2023confounder}, where the common cause heuristic is
formalized as identifying the so-called \emph{primary adjustment sets}
that block all immediately confounding paths between two
variables. However, \citet{guo2023confounder} do not take this as a
one-shot decision; instead, they show that by iterating the common
cause heuristic one can obtain all the minimal sufficient adjustment sets.

Although data-driven confounder selection can be used without any
structural knowledge, it hinges on the premise that the set of variables under consideration is
already a sufficient adjustment set. In any case, substantive
knowledge is essential in observational studies as conditional
ignorability cannot be empirically verified with observational
data alone. Moreover, in general, substantive knowledge cannot be replaced by structure
learning methods that learn the causal graph before selecting
confounders
\citep{maathuis2015generalized,nandy2017estimating,perkovic2018complete},
because (1) the underlying
causal graph is typically not identified from data due to Markov
equivalence \citep{frydenberg1990chain,verma1990equivalence}; (2)
causal structure learning methods themselves rely on strong
assumptions about the causal structure (such as no hidden variables or faithfulness).

\citet{greenland07_invit_commen} and \citet{greenland2016outcome}
challenged the usefulness of data-driven confounder selection (without
structural knowledge) and suggested that other modern techniques such
as regularization and model averaging are more suitable than variable
selection. The same philosophy underlies influence-function-based
causal effect estimators such as double machine learning
\citep{chernozhukov18_doubl_machin_learn_treat_struc_param},
one-step corrected estimation \citep{kennedy2022semiparametric} and targeted maximum likelihood estimation \citep{laan11_target_learn}. In view
of the secondary objectives in \Cref{sec:object-conf-select}, this
criticism is reasonable if we assume the full set $S$ already controls
for confounding, because statistical confounder selection simplifies the
final estimator (especially if one uses matching-based methods) at the cost of
sacrificing robustness or stability. That being said, statistical variable
selection could still be useful in other scenarios by complementing partial structural knowledge.

\appendix
\section{Conditional independence and Markov blanket} \label{apx:subsets}
In the following, when a set operation is applied to a random vector, it means the random vector consisting of entries from the set operation applied to the coordinates.

\subsection{Graphoid properties of conditional independence} \label{apx:graphoid}
Let $W,X,Y,Z$ be random variables. Conditional independence satisfies the following graphoid properties:
\begin{description}
\item[Symmetry] $X \indep Y \mid Z \implies Y \indep X \mid Z$.
\item[Decomposition] $X \indep Y, W \mid Z \implies X \indep Y \mid Z \text{ and } X \indep W \mid Z$.
\item[Weak union] $X \indep Y, W \mid Z \implies X \indep Y \mid W, Z$.
\item[Contraction] $X \indep Y \mid Z \text{ and } X \indep W \mid Y, Z \implies X \indep W, Y \mid Z$.
\item[Intersection] $X \indep Y \mid W, Z \text{ and } X \indep W \mid Y, Z \implies X \indep W, Y \mid Z$.
\end{description}
Among the above, the intersection property additionally requires the positivity of $P_{W,Y,Z}$ in the sense that $P_{W,Y,Z}$ admits a density with respect to $P_{W} P_{Y} P_{Z}$ that is almost everywhere positive; see also \citet[\S2.3.5]{studeny2004book}.
For properties of conditional independence, see also \citet[\S1.1.5]{pearl2009}.

\subsection{Markov blanket and boundary} \label{apx:blanket}
\begin{proof}[Proof of \cref{lem:reduce}]
We prove $\PredA(V) = \PredA(V')$ for every $V' \in \predA(V)$; the proof for $\PredYA$ follows similarly. We first show $\PredA(V) \subseteq \PredA(V')$. By definition of $\PredA$, it suffices to show $\predA(V) \supseteq \predA(V')$. Take $\tilde{V} \in \predA(V')$ so we have
\[ A \indep V' \setminus \tilde{V} \mid \tilde{V}. \]
Since $V' \in \predA(V)$, we also have
\[ A \indep V \setminus V' \mid V'. \]
It follows from contraction that
\[ A \indep V \setminus \tilde{V} \mid \tilde{V} \]
and hence $\tilde{V} \in \predA(V)$. Now we show $\PredA(V) \supseteq \PredA(V')$. This is true because $\PredA(V) \in \predA(V')$, which follows by weak union.
\end{proof}

\begin{proof}[Proof of \cref{lem:predA}]
  Fix $S_A \in \predA(S)$ and $a \in \{0,1\}$. By contraction, $A \indep Y_a \mid S$ and $A \indep S \setminus S_A \mid S_A$ together imply $A \indep Y_a,\, S \setminus S_A \mid S_A$, which further gives $A \indep Y_a \mid S_A$ by decomposition.
\end{proof}

\begin{proof}[Proof of \cref{lem:predYA}]
  Fix $S_{Y|A} \in \predYA(S)$ and $a \in \{0,1\}$. That $S$ controls for confounding implies
  \[ Y_a \indep \{A=a\} \mid S \setminus S_{Y|A},\, S_{Y|A}. \]
  By definition of $\predYA$, we have $Y \indep S \setminus S_{Y|A} \mid A=a,\, S_{Y|A}$,
  which by consistency becomes
  \[ Y_a \indep S \setminus S_{Y|A} \mid A=a,\, S_{Y|A}. \]
  Under positivity, by applying the intersection property to the previous two displays, we have
  \[ Y_a \indep \{A=a\},\, S \setminus S_{Y|A} \mid S_{Y|A}, \]
  which by decomposition further implies $Y_a \indep \{A=a\} \mid
  S_{Y|A}$, which is equivalent to $Y_a \indep A \mid S_{Y|A}$ since
  $A$ is binary.
\end{proof}

\begin{proof}[Proof of \cref{thm:four-sel}]
 To see $\Cdisj(S)$ controls for confounding, note $\PredA(S) \cup \PredYA(S) \in \predA(S)$ by weak union and then apply \cref{lem:predA}.
 That $\CAY(S)$ and $\CYA(S)$ control for confounding under positivity follows from sequentially applying \cref{lem:predA,lem:predYA}.
\end{proof}

\begin{proof}[Proof of \cref{lem:stability}]
Markov boundaries $\PredA(\cdot)$ and $\PredYA(\cdot)$ are well-defined under \cref{assump:positivity}. By the termination condition, either $\PredA(C^{\ast}) = C^{\ast}$ or $\PredYA(C^{\ast}) = C^{\ast}$ holds already. First, suppose $\PredA(C^{\ast}) = C^{\ast}$ holds. We prove $\PredYA(C^{\ast}) = C^{\ast}$ by contradiction. Since $\PredYA(C^{\ast}) \subseteq C^{\ast}$, suppose $\PredYA(C^{\ast}) = C' \subsetneq C^{\ast}$. By the last iteration, $C^{\ast} = \PredYA(\tilde{C})$ for some $\tilde{C} \supsetneq C^{\ast}$. Writing $C^{\ast} = C' \cup (C^{\ast} \setminus C')$, we have
\[ Y \indep C^{\ast} \setminus C' \mid A,\, C' \]
and
\[ Y \indep \tilde{C} \setminus C^{\ast} \mid A,\, C',\, C^{\ast} \setminus C'. \]
Applying contraction, we have
\[ Y \indep \tilde{C} \setminus C' \mid A,\, C', \]
which contradicts $C^{\ast} = \PredYA(\tilde{C})$ and \cref{eqs:PredYA} because $C'$ is a proper subset of $C^{\ast}$. The other case follows similarly.
\end{proof}

\begin{proof}[Proof of \cref{lem:close-intersect}]
It directly follows from the next lemma and the fact that the positivity of $(A,S)$ implies the positivity of $S$.
\end{proof}

\begin{lemma} \label{lem:sep-set-inter}
Let $X,Z$ be two random variables and let $S$ be a finite-dimensional random vector. Define
  \[ \mathfrak{R}_{X|Z}(S) \equiv \{S' \subseteq S: X \indep S \setminus S' \mid S', \, Z \}. \]
Suppose the distribution of $(Z,S)$ is positive in the sense that $P_{Z,S}$ admits a density with respect to $P_Z \prod_{v \in S} P_v$ that is almost everywhere positive.
It holds that $\mathfrak{R}_{X|Z}(S)$ is closed under intersection.
\end{lemma}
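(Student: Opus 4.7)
The plan is to reduce the closure-under-intersection statement to a chain of graphoid manipulations, with the intersection property doing the essential work. Fix $S_1, S_2 \in \mathfrak{R}_{X|Z}(S)$ and partition $S$ into the four disjoint random vectors
\[ M = S_1 \cap S_2, \quad A = S_1 \setminus S_2, \quad B = S_2 \setminus S_1, \quad C = S \setminus (S_1 \cup S_2), \]
so that $S_1 = M \cup A$ and $S_2 = M \cup B$. The two hypotheses then rewrite as
\[ X \indep B, C \mid A, M, Z \qquad \text{and} \qquad X \indep A, C \mid B, M, Z, \]
and the goal is $X \indep A, B, C \mid M, Z$.

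From the first hypothesis, weak union gives $X \indep C \mid A, B, M, Z$ and decomposition gives $X \indep B \mid A, M, Z$. From the second, decomposition gives $X \indep A \mid B, M, Z$. Now I would apply the intersection property to the pair $X \indep A \mid B, M, Z$ and $X \indep B \mid A, M, Z$, with $(M, Z)$ playing the role of the outer conditioning set, to obtain $X \indep A, B \mid M, Z$. Finally, combining this with $X \indep C \mid A, B, M, Z$ via contraction yields $X \indep A, B, C \mid M, Z$, which is exactly $X \indep S \setminus (S_1 \cap S_2) \mid S_1 \cap S_2, Z$ and shows $S_1 \cap S_2 \in \mathfrak{R}_{X|Z}(S)$.

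The only delicate step is verifying the positivity requirement for the intersection property, which demands that the joint distribution of $(A, B, M, Z)$ admit an almost-everywhere-positive density with respect to a suitable product measure. This is where the hypothesis on $(Z, S)$ is used: since the density of $P_{Z, S}$ with respect to $P_Z \prod_{v \in S} P_v$ is a.e.\ positive, marginalizing out the coordinates in $C = S \setminus (S_1 \cup S_2)$ produces an a.e.\ positive density for $(Z, S_1 \cup S_2) = (Z, A, B, M)$ with respect to $P_Z \prod_{v \in S_1 \cup S_2} P_v$, which is exactly the form needed to invoke intersection. Everything else is a routine application of the graphoid axioms listed in \cref{apx:graphoid}, so the proof reduces to bookkeeping once the partition into $M, A, B, C$ is in place.
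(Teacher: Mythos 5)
Your proof is correct and follows essentially the same route as the paper's: both partition $S$ into $S_1\cap S_2$, $S_1\setminus S_2$, $S_2\setminus S_1$ and the remainder, apply decomposition to extract $X\indep S_1\setminus S_2\mid S_2, Z$ and $X\indep S_2\setminus S_1\mid S_1, Z$, merge them via the intersection property, and finish with contraction. The only cosmetic difference is the final contraction step (the paper contracts $X\indep S_1\setminus S_2\mid S_1\cap S_2, Z$ against the original hypothesis, while you contract $X\indep S_1\setminus S_2,\, S_2\setminus S_1\mid S_1\cap S_2, Z$ against a weak-union consequence), and your explicit check of the positivity needed for intersection is a welcome addition.
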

\begin{proof}
  Take $S_1, S_2 \in \mathfrak{R}_{X|Z}(S)$. By definition,
  \begin{align}
    X \indep S \setminus S_1 \mid S_1,\, Z \label{eqs:app-1} \\
    X \indep S \setminus S_2 \mid S_2,\, Z \label{eqs:app-2}
  \end{align}
  which can be rewritten as
  \begin{align*}
    X \indep S_2 \setminus S_1,\, S \setminus (S_1 \cup S_2) \mid S_1 \cap S_2,\, S_1 \setminus S_2,\, Z, \\
    X \indep S_1 \setminus S_2,\, S \setminus (S_1 \cup S_2) \mid S_1 \cap S_2,\, S_2 \setminus S_1,\, Z.
  \end{align*}
  Further by decomposition,
  \begin{align*}
    X \indep S_2 \setminus S_1 \mid S_1 \cap S_2,\, S_1 \setminus S_2,\, Z\\
    X \indep S_1 \setminus S_2 \mid S_1 \cap S_2,\, S_2 \setminus S_1,\, Z.
  \end{align*}
  Applying intersection under positivity, we have
  \[ X \indep S_1 \setminus S_2, \, S_2 \setminus S_1 \mid S_1 \cap S_2, \, Z. \]
  which by decomposition implies
  \[ X \indep S_1 \setminus S_2 \mid S_1 \cap S_2, \, Z. \]
  Meanwhile, note that \eqref{eqs:app-1} can be rewritten as
  \[ X \indep S \setminus S_1 \mid S_1 \setminus S_2, \, S_1 \cap S_2, \, Z.\]
  Applying contraction to the two previous displays, we get
  \[ X \indep S_1 \setminus S_2, \, S \setminus S_1 \mid S_2 \cap S_2, \, Z,\]
  i.e.,
  \[ X \indep S \setminus (S_1 \cap S_2) \mid S_1 \cap S_2, \, Z. \]
  Hence, $S_1 \cap S_2 \in \mathfrak{R}_{X|Z}(S)$.
\end{proof}

\begin{proof}[Proof of \cref{prop:coordinate-wise}]
  We show $\PredA(V) = \Gamma_A(V)$, where $\Gamma_A(V) \equiv \{v \in V: A \not \indep v \mid V \setminus \{v\} \}$. First, we show $\Gamma_A(V) \subseteq \PredA(V)$. Fix $v \in \Gamma_A(V)$. Suppose $v \notin \PredA(V)$. We can then rewrite $A \indep V \setminus \PredA(V) \mid \PredA(V)$ as
  \[ A \indep v, \, V \setminus (\PredA(V) \cup \{v\}) \mid \PredA(V). \]
  By weak union, it follows that $A \indep v \mid V \setminus \{v\}$, contradicting $v \in \Gamma_A(V)$. Now we show $\PredA(V) \subseteq \Gamma_A(V)$. Fix $v \in \PredA(V)$. By definition of $\PredA(V)$, we have
  \[ A \indep V \setminus \PredA(V) \mid \PredA(V) \setminus \{v\},\, v.\]
  Suppose $v \notin \Gamma_A(V)$, i.e., $A \indep v \mid V \setminus \{v\}$, which can be rewritten as
  \[ A \indep v \mid \PredA(V) \setminus \{v\}, \, V \setminus \PredA(V). \]
  Note that \cref{assump:positivity} implies the distribution over $A$ and $V$ is also positive. Applying intersection to the previous two displays, we get
  \[ A \indep v,\, \PredA(V) \mid \PredA(V) \setminus \{v\}, \]
  which contradicts the minimality of $\PredA(V)$ in $\predA(V)$ in \cref{eqs:PredA}.

  By $P(A=a,Y,V) = P(A=a) P(Y,V \mid A=a) = P(A=a) P(Y_a,V \mid A=a)$ for $a=0,1$, \cref{assump:positivity} implies that the distribution over $A,Y,V$ is also positive. The proof for the equality on $\PredYA(V)$ follows similarly.
\end{proof}

\section{Graphical results} \label{apx:graphical}
\subsection{DAG and d-separation} \label{apx:d-sep}
We use standard graphical terminology; see also, e.g.,
\citet[\S1.2]{pearl2009}. In particular, for two vertices $u$ and $v$,
we say $u$ is an ancestor of $v$, or equivalently $v$ is a descendant
of $u$, if either $u = v$ or there is a causal path $u \rightarrow
\dots \rightarrow v$. The set of ancestors of vertex $u$ in graph $\g$
is denoted as $\An_{\g}(u)$. Similarly, the set of descendants is
denoted as $\De_{\g}(u)$. By definition, $u \in \An_{\g}(u)$ and $u
\in \De_{\g}(u)$. We use symbol $\nd_{\g}(u)$ for non-descendants of
$u$, i.e., the complement of $\De_{\g}(u)$. The definitions of relational
sets extend disjunctively to a set of vertices, e.g.,
\[ \An_{\g}(L) \equiv \cup_{v \in L} \An_{\g}(v), \quad  \De_{\g}(L) \equiv \cup_{v \in L} \De_{\g}(v).\]

For two vertices $u$ and $v$, a path $\pi$ between them consists of a sequence
of distinct vertices such that consecutive vertices are adjacent in the graph.
For vertices $w,z$ also on path $\pi$, we use notation  $\pi(w,z)$ for the subpath between $w$ and $z$.
A non-endpoint vertex $k$ is called a collider on the path if it is of the form
$ \cdots \circ \rightarrow k \leftarrow \circ \cdots$; otherwise $k$ is called a
non-collider.
\begin{definition}[d-connecting path] \label{def:d-conn}
A path $p$ between $u$ and $v$ is called d-connecting given a set of vertices $L$ ($u, v \notin L$) if
(1) every non-collider on $p$ is excluded from $L$ and (2) every collider on $p$ is in $L$ or is an ancestor of some vertex in $L$.
\end{definition}
\noindent When a path $p$ is not d-connecting given $L$, we also say $L$ blocks the path $p$.

\begin{definition}[d-separation] \label{def:d-sep}
If there is no d-connecting path between $u$ and $v$ given $L$ ($u, v \notin L$), we say that $u$ and $v$ are d-separated given $L$. Similarly, for disjoint vertex sets $U,V,L$, we say that $U$ and $V$ are d-separated given $L$, if there is no d-connecting between $u$ and $v$ given $L$ for $u \in U$, $v \in V$.
\end{definition}
\noindent We use symbol $u \indep_{\g} v \mid L$ to denote that $u$ and $v$ are d-separated by $L$ in graph $\g$. A similar notation applies to the d-separation between sets. It holds that d-separation shares the graphoid properties of conditional independence listed in \cref{apx:graphoid}, where symbol $\indep$ is replaced with $\indep_{\g}$. In addition, d-separation satisfies certain properties that are in general not obeyed by abstract conditional independence, including the following \citep[Theorem 11]{pearl1988book}, where $W,X,Y,Z$ are disjoint sets of vertices of a directed acyclic graph $\g$.

\begin{description}
\item[Composition] $X \indep_{\g} Y \mid Z$ and $X \indep_{\g} W \mid Z$  $\implies X \indep_{\g} W,Y \mid Z$.
\item[Weak transitivity] $X \indep_{\g} Y \mid Z$ and $X \indep_{\g} Y \mid Z, s$ for vertex $s$ $\implies$ either $s \indep_{\g} X \mid Z$ or $s \indep_{\g} Y \mid Z$.
\end{description}
We strengthen weak transitivity in \cref{thm:transitivity}.

\smallskip DAG $\g$ defines the Bayesian network model over the variables $V$ in the graph. We say a distribution $P$ follows the Bayesian network model, or $P$ is Markov to $\g$, if $P$ factorizes according to the graph:
\[ p(V) = \prod_{v} p(v \mid \Pa_{\g}(v)),\]
where $p$ is the density of $P$ with respect to some product dominating measure. It can be shown that $P$ is Markov to $\g$ if and only if $P$ obeys the global Markov property implied by $\g$: for disjoint subsets $X,Y,Z$ of $V$,
\[ X \indep_{\g} Y \mid Z \:\implies\: X \indep Y \mid Z \text{ under $P$}. \]
See, e.g., \citet[Theorem 3.27]{lauritzen1996graphical}. If the reverse holds, i.e., for  disjoint subsets $X,Y,Z$,
\[ X \indep Y \mid Z \text{ under $P$} \:\implies\: X \indep_{\g} Y \mid Z, \]
we say $P$ is faithful to $\g$.

\subsection{Causal model and SWIGs} \label{apx:swig}
A causal DAG $\g$ can be associated with different assumptions on the
distribution of factual and counterfactual random variables, depending
on how we interpret graph $\g$
\citep{robins11_alter_graph_causal_model_ident_direc_effec}. Most
notably, \citet{robins1986new}
introduced the ``finest fully randomized causally interpretable
structured tree graph'' (FFRCISTG) model, while
\citet{pearl1995causal} interpreted $\g$ as positing a non-parametric
structural equation model with independent errors (NPSEM-IE). The
conditional independencies implied by the FFRCISTG model can
be read off from the corresponding single-world intervention graph
(SWIG) using d-separation \citep{richardson2013single}. For the same graph $\g$, the NPSEM-IE model is a submodel of the
FFRCISTG model, as the former additionally posits cross-world
independencies \citep{richardson2013single,shpitser2022multivariate}.

\subsection{Conjunctive cause criterion} \label{apx:common-cause}
We prove $A \indep Y(a) \mid \Ccommon$ holds under \cref{assump:S=Z} for the FFRCISTG potential outcome model (and hence the stronger NPSEM-IE model) represented by graph $\g$.

\begin{proof}[Proof of \cref{prop:common-cause}]
  Let $\g(a)$ be the SWIG corresponding to intervening on $A$ and imposing value $a$. Graph $\g(a)$ is formed from $\g$ by splitting $A$ into a random part $A$ and a fixed part $a$, where $A$ inherits all the edges into $A$ and $a$ inherits all the edges out of $A$. Additionally, any descendant $V_i$ of $A$ is labelled as $V_i(a)$ in $\g(a)$. In particular, $Y$ is labelled as $Y(a)$ in $\g(a)$. To prove our result, it suffices to show that $A$ and $Y(a)$ are d-separated in $\g(a)$ given $\Ccommon$.

Under $Z=S$, by the fact that $\g$ is a DAG over $Z \cup \{A,Y\}$, we have $\Ccommon = \An_{\g}(A) \cap \An_{\g}(Y) \setminus \{A\}$. Suppose there is a path $p$ in $\g(a)$ that d-connects $A$ and $Y(a)$ given $\Ccommon$. First, observe that this is impossible if $p$ does not contain any collider, since by construction of $\g(a)$ (no edge stems out of $A$) and $Y \notin \An_{\g}(A)$, $p$ must be of the form $A \leftarrow \dots \rightarrow Y(a)$ and is thus blocked by $\Ccommon$. Hence, $p$ must contain at least one collider. Further, again by construction of $\g(a)$, $p$ must also contain non-colliders. Let $\gamma$ be the collider on $p$ that is closest to $A$. Also let $\delta$ be the vertex that precedes $\gamma$ on the subpath $p(A,\gamma)$.
Vertex $\delta \neq A$ is a non-collider on $p$. For $p$ to d-connect, $\gamma$ is an ancestor of some $v \in \Ccommon$. However, this implies that $\delta \in \Ccommon$ and hence $p$ is blocked. Therefore, such a d-connecting path $p$ cannot exist.
\end{proof}

\subsection{Disjunctive cause criterion} \label{apx:disj-cause}
The following proof of \cref{prop:disj-cause} is due to \citet{richardson2018discussion}, which is based on the following result on inducing paths.

\begin{definition} \label{def:inducing}
Consider vertices $u, v$ and a vertex set $L \subset V \setminus \{u,v\}$. A path between $u$ and $v$ is called an inducing path relative to $L$ if every non-endpoint vertex on the path that is not in $L$ is (1) a collider and (2) is an ancestor of $u$ or $v$.
\end{definition}

\begin{lemma}[\citet{verma1990equivalence}] \label{lem:inducing}
Let $\g$ be a DAG over vertices $V$. Fix two vertices $u,v$ and a set $L \subset V$ such that $u, v \notin L$. Then $u$ and $v$ cannot be d-separated by any subset of $V \setminus (L \cup \{u,v\})$ if and only if there exists an inducing path between $u$ and $v$ relative to $L$ in $\g$.
\end{lemma}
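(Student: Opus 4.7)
The plan is to prove the two directions of the biconditional separately via complementary path-surgery arguments on $\g$. For the $(\Rightarrow)$ direction I would extract an inducing path from any d-connecting path relative to a canonical conditioning set; for the $(\Leftarrow)$ direction I would take an arbitrary candidate separator $Z$ and iteratively modify the given inducing path into one that d-connects $u$ and $v$ given $Z$.

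For $(\Rightarrow)$, I set $Z^{\ast} \defeq \An_{\g}(\{u,v\}) \setminus (L \cup \{u,v\})$, which is itself a subset of $V \setminus (L \cup \{u,v\})$. By hypothesis $Z^{\ast}$ does not d-separate $u$ from $v$, so there exists a d-connecting path $\pi$ given $Z^{\ast}$; take $\pi$ shortest among such. I then verify that $\pi$ is an inducing path relative to $L$ by case analysis on each non-endpoint $w$. If $w$ is a collider, the d-connection condition forces $w$ or some descendant into $Z^{\ast} \subseteq \An_{\g}(\{u,v\})$, so $w \in \An_{\g}(\{u,v\})$. If $w$ is a non-collider, then $w \notin Z^{\ast}$, so either $w \in L$ (as required) or $w \notin \An_{\g}(\{u,v\})$; the latter is excluded by a descent argument, namely following an outgoing edge of $w$ along $\pi$ one eventually reaches either an endpoint (forcing $w$ directly into $\An_{\g}(\{u,v\})$) or a collider (which by the collider case lies in $\An_{\g}(\{u,v\})$, propagating ancestor status back to $w$ along the directed tail just traversed).

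For $(\Leftarrow)$, I fix $Z \subseteq V \setminus (L \cup \{u,v\})$ and prove the following strengthened claim by induction: every path from $u$ to $v$ satisfying (a) every non-collider non-endpoint lies in $L \cup (V \setminus Z)$ and (b) every collider lies in $\An_{\g}(\{u,v\})$ can be turned into a d-connecting path between $u$ and $v$ given $Z$. An inducing path relative to $L$ satisfies both (a) and (b). The induction is on the number of \emph{blocked} colliders on $\pi$, meaning colliders $c$ with $(\{c\} \cup \De_{\g}(c)) \cap Z = \emptyset$. In the base case of zero blocked colliders, $\pi$ itself d-connects $u$ and $v$ (non-colliders satisfying (a) do not block, and every collider d-connects by definition of non-blocked). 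In the inductive step, pick a blocked collider $c$; since $\De_{\g}(c) \cap Z = \emptyset$ and $c$ is an ancestor of some endpoint, say $v$, there is a directed path $\delta \colon c \to \cdots \to v$ whose interior avoids $Z$. Splicing the $u$-to-$c$ segment of $\pi$ with $\delta$ and truncating at the first self-intersection produces a simple path $\pi'$ in which $c$ is a chain vertex rather than a collider, the new interior vertices from $\delta$ are $Z$-avoiding non-colliders, and every surviving collider still lies in $\An_{\g}(\{u,v\})$; the blocked-collider count strictly decreases and the invariant is preserved, so the inductive hypothesis closes the argument.

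The main obstacle is identifying the correct inductive invariant. The spliced $\pi'$ is no longer an inducing path relative to $L$ in the strict sense, since the interior of $\delta$ need not lie in $L$; the fix is to weaken ``in $L$'' to ``in $L \cup (V \setminus Z)$'' for non-colliders, which is precisely what is needed to guarantee d-connection once no collider is blocked. That $\delta$ can be chosen with $Z$-avoiding interior is immediate from $\De_{\g}(c) \cap Z = \emptyset$, and the acyclicity of $\g$ ensures both the existence of $\delta$ and that the iterative surgery terminates in finitely many steps.
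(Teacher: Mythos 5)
Your overall strategy is sound and is essentially the standard Verma--Pearl argument; note that the paper itself does not prove \cref{lem:inducing} but only cites it, so there is no in-paper proof to compare against. The forward direction is correct: conditioning on $Z^{\ast} = \An_{\g}(\{u,v\}) \setminus (L \cup \{u,v\})$ and extracting a d-connecting path does yield an inducing path (your descent argument for non-colliders is in effect a re-derivation of \cref{lem:an-d-conn}, which you could simply invoke; the ``shortest path'' assumption is never used). The backward direction's induction on the number of blocked colliders, with the splice-and-truncate surgery and the weakened invariant for non-colliders, is the right mechanism and is carried out correctly.

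There is, however, one genuine issue, at the step ``an inducing path relative to $L$ satisfies both (a) and (b).'' Your invariant (b) requires \emph{every} collider on the path to lie in $\An_{\g}(\{u,v\})$, but \cref{def:inducing} as literally written only constrains non-endpoint vertices \emph{not in} $L$; a collider that belongs to $L$ is left entirely unconstrained, so (b) does not follow. Under that literal reading your induction cannot start, and in fact the lemma itself fails: in the three-vertex DAG $u \rightarrow m \leftarrow v$ with $L = \{m\}$ and $m$ a sink, the path $u \rightarrow m \leftarrow v$ is vacuously an inducing path relative to $L$, yet $u \indep_{\g} v \mid \emptyset$ and $\emptyset \subseteq V \setminus (L \cup \{u,v\})$. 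The standard definition --- every non-endpoint vertex not in $L$ is a collider, \emph{and} every collider, whether in $L$ or not, is an ancestor of $u$ or $v$ --- is exactly what your proof needs and what makes the equivalence true. Your own forward direction in fact produces a path with this stronger property, since you show that \emph{all} colliders on $\pi$ are ancestors of $\{u,v\}$. So the repair is purely definitional: state explicitly that you are using the stronger (standard) notion of inducing path, after which the rest of your argument goes through as written.
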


\begin{proof}[Proof of \cref{prop:disj-cause}]
Let DAG $\g$ over vertices $V$ be the underlying causal graph. We prove the statement by contradiction. Suppose $\CvdW$ is not a sufficient adjustment set. By \cref{prop:backdoor}, there exists some d-connecting backdoor path $\pi$ between $A$ and $Y$ given $S \cap (\An_{\g}(A) \cup \An_{\g}(Y))$. Remove edges out of $A$ from $\g$ and call the resulting graph $\tilde{\g}$. Observe that $\An_{\tilde{\g}}(A) \cup \An_{\tilde{\g}}(Y) = \An_{\g}(A) \cup \An_{\g}(Y)$, and the paths between $A$ and $Y$ in $\tilde{\g}$ exactly correspond to the back-door paths between $A$ and $Y$ in $\g$. Moreover, the path $\pi$ d-connects $A$ and $Y$ in $\tilde{\g}$ given $S \cap (\An_{\tilde{\g}}(A) \cup \An_{\tilde{\g}}(Y))$.

We claim that $\pi$ is an inducing path between $A$ and $Y$ in $\tilde{\g}$ relative to $L \equiv V \setminus (S \cup \{A,Y\})$. To see this, let $k$ be any non-endpoint vertex on $\pi$. Suppose $k$ is a non-collider. Because $\pi$ is d-connected given $S \cap (\An_{\tilde{\g}}(A) \cup \An_{\tilde{\g}}(Y))$, it follows from \cref{lem:an-d-conn} that $k$ must be either an ancestor of $A$, an ancestor of $Y$, or an ancestor of $S \cap (\An_{\tilde{\g}}(A) \cup \An_{\tilde{\g}}(Y))$. However, in any case, $k$ is included in the conditioning set and $\pi$ would be blocked. Hence, $k$ is a collider. Further, by the d-connection of $\pi$, $k$ must be an ancestor of $S \cap (\An_{\tilde{\g}}(A) \cup \An_{\tilde{\g}}(Y))$, which implies $k \in \An_{\tilde{\g}}(A) \cup \An_{\tilde{\g}}(Y)$. Hence, $\pi$ is an inducing path between $A$ and $Y$ relative to $L$. By \cref{lem:inducing}, $A$ and $Y$ cannot be d-separated in $\tilde{\g}$ by any subset of $S$. This contradicts the existence of a sufficient adjustment set postulated by \cref{assump:some-subset-S}, which by \cref{prop:backdoor} is equivalent to the existence of a subset $S$ blocking all back-door paths from $A$ to $Y$ under the faithfulness assumption.
\end{proof}

\subsection{Minimality of \texorpdfstring{$\CAY^{\ast}$}{C\_{AY}\^*} and \texorpdfstring{$\CYA^{\ast}$}{C\_{YA}\^*} } \label{apx:minimal}
We first strengthen the weak transitivity property of d-separation \citep[cf][Theorem 12]{pearl1988book}.

\begin{theorem}[Weak transitivity, strengthened] \label{thm:transitivity}
Let $\g$ be a DAG. Let $x,y$ be two vertices and $W, Z$ be two disjoint vertex sets such that $x,y \notin W \cup Z$. Suppose $x \indep_{\g} y \mid W$ and $x \indep_{\g} y \mid W, Z$. Then there exists $z \in Z$ such that either $z \indep_{\g} x \mid W$ or $z \indep_{\g} y \mid W$ holds.
\end{theorem}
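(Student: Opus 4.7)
My plan is to prove \cref{thm:transitivity} by induction on $|Z|$. The base case $|Z| = 1$ reduces to the classical weak transitivity of d-separation (Pearl 1988, Theorem 12), which is stated as one of the properties in \cref{apx:d-sep}, so nothing extra is required there.

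For the inductive step, with $|Z| \ge 2$, I would split into two cases based on whether a strictly smaller subset already witnesses the d-separation. In the easy case, there exists a proper non-empty $Z' \subsetneq Z$ with $x \indep_{\g} y \mid W \cup Z'$. Then I can apply the inductive hypothesis to $(x, y, W, Z')$ to produce some $z \in Z' \subseteq Z$ with $z \indep_{\g} x \mid W$ or $z \indep_{\g} y \mid W$, which finishes the case.

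The hard case is when $x \not\indep_{\g} y \mid W \cup Z'$ for every proper non-empty $Z' \subsetneq Z$. Here I would argue by contradiction, assuming that $z \not\indep_{\g} x \mid W$ and $z \not\indep_{\g} y \mid W$ for every $z \in Z$. For each $z \in Z$, the above d-connection lets me fix a d-connecting path $\pi_z$ between $x$ and $y$ given $W \cup (Z \setminus \{z\})$; since $x \indep_{\g} y \mid W \cup Z$ blocks $\pi_z$ when $z$ is added and $z \notin W$, the vertex $z$ must occur on $\pi_z$ as a non-collider; and since $x \indep_{\g} y \mid W$ also blocks $\pi_z$, some collider on $\pi_z$ must owe its activation to $Z \setminus \{z\}$ rather than to $W$.

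The crux, and the main obstacle, is to turn this structural information into a d-connecting path between $x$ and $y$ given only $W$, which would contradict $x \indep_{\g} y \mid W$. I would try to do this by picking a $\pi_z$ of minimal length and, at each collider on $\pi_z$ that loses its activator when $Z \setminus \{z\}$ is removed from the conditioning set, splicing in a short detour obtained from the d-connecting paths witnessing $z' \not\indep_{\g} x \mid W$ or $z' \not\indep_{\g} y \mid W$ for an appropriate $z' \in Z \setminus \{z\}$. Keeping the resulting walk simple and verifying that every collider on the final path has a descendant in $W$ while every non-collider lies outside $W$ is the delicate combinatorial task, and it is where the non-collider role of each $z$ on its $\pi_z$ together with the minimality of the chosen path should be essential.
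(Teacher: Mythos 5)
Your reduction is fine as far as it goes (the base case is classical weak transitivity, the ``easy'' inductive case is handled correctly, and the two structural observations in the hard case --- that $z$ must lie on $\pi_z$ as a non-collider, and that some collider on $\pi_z$ is an ancestor of $Z\setminus\{z\}$ but not of $W$ --- are both correct), but the proof stops exactly where the theorem's content begins. The splicing step you describe is not merely delicate; as sketched it does not work. A collider $c$ on $\pi_z$ that is an ancestor of $Z\setminus\{z\}$ but not of $W$ cannot be ``reactivated'' by attaching more path to the walk: for the path through $c$ to be open given $W$, the vertex $c$ must be an ancestor of $W$, full stop. The only remedy is to reroute --- leave $\pi_z$ at $c$, travel down the directed path $c\to\dots\to z'$ for some $z'\in Z\setminus\{z\}$, and continue from $z'$ along a path to $x$ or $y$ --- but this turns $c$ into a non-collider and creates a new junction at $z'$ whose collider status and ancestor relations you have not controlled. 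In particular, if $z'$ becomes a collider on the rerouted walk you would need $z'\in\An_{\g}(W)$, which in general fails; and if it is a non-collider you need the continuation from $z'$ not to immediately re-enter the same difficulties. Nothing in your setup rules these failures out, and indeed one should not expect to always contradict $x\indep_{\g} y\mid W$: sometimes the path one can build is open only given $W\cup Z$, so the contradiction must be with the \emph{other} hypothesis.

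The paper's proof supplies precisely the ingredients your sketch is missing, and dispenses with the induction entirely. It assumes for contradiction that every $z\in Z$ is d-connected to both $x$ and $y$ given $W$, via paths $p_z$ and $q_z$; it first shows $Z\cap\An_{\g}(W)=\emptyset$ (via \cref{lem:concat}), then chooses $z^0\in Z$ with no other element of $Z$ among its ancestors, and concatenates $p_{z^0}$ and $q_{z^0}$ at their first meeting point $t$. If $t$ is a non-collider the concatenation contradicts $x\indep_{\g} y\mid W$; if $t$ is a collider, then $t\in\An_{\g}(z^0)\subseteq\An_{\g}(Z)$ and the concatenation is shown to be open given $W\cup Z$, contradicting $x\indep_{\g} y\mid W,Z$ --- and it is the two preparatory facts ($Z\cap\An_{\g}(W)=\emptyset$ and the ancestral minimality of $z^0$ within $Z$) that make the vertex-by-vertex verification go through. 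To complete your argument you would need to import essentially these same devices, at which point the induction buys you nothing; as submitted, the hard case is a statement of intent rather than a proof.
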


To prove \cref{thm:transitivity}, we need the following two lemmas on d-connecting paths.
\begin{lemma} \label{lem:an-d-conn}
Let $\pi$ be a d-connecting path between $x$ and $y$ given $W$. Then, every vertex on $\pi$ is in $\An(\{x,y\} \cup W)$. In consequence, if $z$ on the path is not an ancestor of $W$, then either $\pi(z,x)$ or $\pi(z,y)$ is a causal path from $z$ to the endpoint.
\end{lemma}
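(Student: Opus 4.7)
\begin{proofsketch}
The plan is to inspect each vertex $v$ on $\pi$ and trace a directed sub-walk along $\pi$ starting from $v$ until it terminates at either an endpoint or a collider. Both outcomes land $v$ in $\An(\{x,y\} \cup W)$.

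First, if $v$ is a collider on $\pi$, then the d-connecting condition (\cref{def:d-conn}) forces $v \in W \cup \An(W) \subseteq \An(\{x,y\} \cup W)$, so we are done. Otherwise, $v$ is a non-collider, which means at least one of the two edges of $\pi$ incident to $v$ points away from $v$ (if $v$ is an endpoint, take the unique adjacent edge; if that edge points into $v$, then $v$ itself is an ancestor of itself and we stop). Choose such an outgoing edge $v \to u_1$ and trace along $\pi$ in that direction. Inductively, suppose we have already obtained a directed sub-walk $v \to u_1 \to \cdots \to u_i$ along $\pi$. If $u_i \in \{x, y\}$, we stop with $v \in \An(\{x,y\})$. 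Otherwise $u_i$ is an internal vertex of $\pi$ that receives an arrowhead from $u_{i-1}$; if $u_i$ is a collider on $\pi$ then $u_i \in \An(W)$ (by the d-connecting condition), hence $v \in \An(W)$ and we stop; and if $u_i$ is a non-collider, then its other incident edge on $\pi$ must point outward, giving the next step $u_i \to u_{i+1}$. Since $\pi$ has finitely many vertices, the trace terminates in one of the two valid ways, establishing $v \in \An(\{x,y\} \cup W)$.

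For the ``in consequence'' claim, fix $z$ on $\pi$ with $z \notin \An(W)$. Then $z$ cannot be a collider (otherwise $z \in \An(W)$ directly). Apply the trace above starting from $z$ along an outgoing edge: if the trace ever reached a collider $u_i$, we would get $z \in \An(u_i) \subseteq \An(W)$, a contradiction. Hence the trace must terminate at an endpoint $x$ or $y$, and the resulting directed sub-walk coincides with the subpath $\pi(z, x)$ or $\pi(z, y)$, which is therefore a causal path from $z$ to the endpoint.

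The only delicate point is verifying that a non-collider $z$ genuinely admits an outgoing edge along $\pi$ to initialize the trace; this is immediate from the definition of ``non-collider'' ruling out the configuration $\mathord\to z \mathord\leftarrow$, so at least one of the adjacent edges on $\pi$ is $z \to \cdot$. The rest is a straightforward finite induction along $\pi$ using only the d-connecting conditions on colliders and non-colliders.
\end{proofsketch}
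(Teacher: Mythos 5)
Your proof is correct and takes essentially the same route as the paper: the paper's own proof is a one-line appeal to the standard fact that every vertex on a d-connecting path is an ancestor of an endpoint or of a collider (hence of $W$), and your directed-trace argument is precisely the standard proof of that fact, with the details spelled out. The consequence clause is also handled correctly, since a vertex $z \notin \An(W)$ cannot be a collider and its trace can never meet one, so it must run directly to an endpoint along $\pi$.
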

\begin{proof}
It follows from the fact that every vertex on a d-connecting path is an ancestor of either endpoint or a collider.
\end{proof}

\begin{lemma} \label{lem:concat}
Let $\g$ be a DAG. Suppose path $\pi_1$ d-connects $x$ and $z$ given $W$ and path $\pi_2$ d-connects $y$ and $z$ given $W$. Suppose $z \in \An_{\g}(W)$. Then $x$ and $y$ are d-connected given $W$.
\end{lemma}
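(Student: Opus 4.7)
The plan is to splice $\pi_1$ and $\pi_2$ together at a suitable shared vertex to obtain a simple d-connecting path from $x$ to $y$ given $W$. \textbf{Reduction.} First handle the degenerate cases: if $x$ lies on $\pi_2$, then the subpath $\pi_2(x,y)$ inherits d-connection from $\pi_2$ (the adjacent edges at each internal vertex are unchanged, and $x \notin W$), and similarly if $y$ lies on $\pi_1$. So I may assume $x$ does not appear on $\pi_2$ and $y$ does not appear on $\pi_1$.

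\textbf{Construction.} Let $q$ be the vertex closest to $x$ along $\pi_1$ that also lies on $\pi_2$; such a $q$ exists because $z$ is on both. Define $\pi \defeq \pi_1(x,q) \cdot \pi_2(q,y)$. By the choice of $q$, the interior of $\pi_1(x,q)$ is disjoint from $\pi_2$, so $\pi$ is a simple path. Every internal vertex of $\pi$ other than $q$ retains the same adjacent edges as on one of $\pi_1$ or $\pi_2$, so its collider/non-collider status is unchanged and it does not block $\pi$ given $W$.

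\textbf{The vertex $q$.} Write $f_1$ for the edge of $\pi_1$ at $q$ pointing toward $x$ and $e_2$ for the edge of $\pi_2$ at $q$ pointing toward $y$; these two edges determine the role of $q$ on $\pi$. If $q$ is a non-collider on $\pi$, then at least one of $f_1, e_2$ points out of $q$, making $q$ a non-collider on the corresponding original path and forcing $q \notin W$. If $q = z$ is a collider on $\pi$, the hypothesis $z \in \An_{\g}(W)$ closes this case. If $q \neq z$ is a collider on $\pi$ (so both $f_1, e_2$ point into $q$), I would apply \cref{lem:an-d-conn} to $\pi_1$: this gives $q \in \An_{\g}(\{x,z\} \cup W) = \An_{\g}(\{x\} \cup W)$, using $z \in \An_{\g}(W)$. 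If in addition $q \notin \An_{\g}(W)$, the second (``causal subpath'') conclusion of \cref{lem:an-d-conn} applies; the branch where $\pi_1(q,z)$ is causal from $q$ gives $q \in \An_{\g}(z) \subseteq \An_{\g}(W)$, a contradiction, and the branch where $\pi_1(q,x)$ is causal from $q$ forces $f_1$ to point out of $q$, also a contradiction. Hence $q \in \An_{\g}(W)$, and the collider $q$ does not block $\pi$.

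\textbf{Main obstacle.} The nontrivial step is the collider case at $q \neq z$: $q$ need not be a collider on either $\pi_1$ or $\pi_2$ individually, so the collider-admissibility condition must be derived indirectly. The ancestrality information furnished by \cref{lem:an-d-conn}, combined with the orientation constraint that $f_1$ points into $q$, is precisely what rules out $q \in \An_{\g}(x) \setminus \An_{\g}(W)$, and this is where the hypothesis $z \in \An_{\g}(W)$ is used in an essential way.
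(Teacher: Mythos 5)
Your proof is correct and follows essentially the same route as the paper's: splice the two paths at the first vertex of $\pi_1$ after $x$ that lies on $\pi_2$, then check the junction vertex by cases, using \cref{lem:an-d-conn} together with $z \in \An_{\g}(W)$ to certify the collider case. Your version is in fact slightly more careful than the paper's in two places — the explicit reduction guaranteeing the concatenation is a simple path, and the cleaner treatment of the non-collider case at the junction — but these are refinements of the same argument, not a different one.
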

\begin{proof}
Let $t$ be the first vertex after $x$ on $\pi_1$ that is also on $\pi_2$. Let $\pi^{\ast}$ be the path formed by concatenating $\pi_1(x,t)$ and $\pi_2(t,y)$ (one can check that there are no duplicating vertices in $\pi^{\ast}$). Observe that every non-endpoint vertex on $\pi^{\ast}$ has the same collider status as the same vertex on $\pi_1$ or $\pi_2$, with possible exception of $t$. Thus, it suffices to show that $\pi^{\ast}$ is not blocked by $t$.

Suppose $t \neq y$, otherwise $\pi^{\ast}$ is obviously d-connected. There are two scenarios:
\begin{enumerate}[(i)]
\item The vertex $t$ is a collider on $\pi^{\ast}$; it suffices to show $t \in \An_{\g}(W)$. This is immediately true if $t = z$. Otherwise, suppose $t \neq z$ and $t \notin \An_{\g}(W)$. Then by \cref{lem:an-d-conn}, that $\pi(x,t)$ terminates with an arrow into $t$ implies that $\pi_1(t,z)$ must be of the form $t \rightarrow \dots \rightarrow z$, which gives $t \in \An_{\g}(W)$ and hence a contradiction.
\item The vertex $t$ is a non-collider on $\pi^{\ast}$. it suffices to show $t \notin W$. This is immediately true if $t \neq z$. Otherwise, $t$ must be a non-collider on either $\pi_1$ or $\pi_2$ (in order for $t$ to be a non-collider on the concatenated path $\pi^*$). The d-connectedness of $\pi_1$ and $\pi_2$ also implies $t \notin W$.
\end{enumerate}
\end{proof}

\begin{proof}[Proof of \cref{thm:transitivity}]
We prove the statement by contradiction. Suppose for every $z \in Z$, there exist path $p_z$ that d-connects $x$ and $z$ given $W$ and path $q_z$ that d-connects $y$ and $z$ given $W$. Observe that $p_z$ cannot be a causal path from $z$ to $x$ and $q_z$ cannot be a causal path from $z$ to $y$, since otherwise by concatenating the two paths at the first vertex they intersect (which is a non-collider), $x$ and $y$ are d-connected given $W$, which contradicts our assumption.

We claim that $Z \cap \An_{\g}(W) = \emptyset$. Otherwise, suppose $z' \in Z$ is an ancestor of $W$. Then by \cref{lem:concat} and the existence of paths $p_{z'}, q_{z'}$, vertices $x$ and $y$ are d-connected given $W$, which contradicts $x \indep_{\g} y \mid W$.

Choose $z^0$ from $Z$ such that no other element of $Z$ is an ancestor of $z^0$. Consider path $p_{z^0}$ between $x,z^0$ and path $q_{z^0}$ between $y,z^0$. Both paths are d-connecting given $W$. Let $t$ be first vertex after $x$ on $p_{z^0}$ that is also on $q_{z^0}$, with $t=z^0$ as a special case. Let $\pi$ be the path formed by concatenating $p_{z^0}(x,t)$ and $q_{z^0}(t,y)$. There are two cases.
\begin{enumerate}[(i)]
\item Vertex $t$ is a collider on $\pi$. By $x \indep_{\g} y \mid W$ and \cref{lem:concat}, $t$ is not an ancestor of $W$. Then by \cref{lem:an-d-conn}, $p_{z^0}(t,z^0)$ must be of the form $t \rightarrow \dots \rightarrow z^0$. We claim that $\pi$ between $x$ and $y$ is d-connecting given $Z$ and $W$, which would contradict $x \indep_{\g} y \mid Z, W$. To see this, let us inspect every non-endpoint vertex $v$ on $\pi$ and verify the condition on $v$ for $\pi$ to d-connect given $W,Z$. If $v = t$, then $v$ is a collider and $v$ is an ancestor of $Z$ and hence of $W \cup Z$.

Now suppose $v \neq t$. The status (collider or non-collider) of $v$ is the same as its status on $p_{z^0}$ or $q_{z^0}$.
We show $v$ does not block $\pi$. If $v$ is a collider, then it is an ancestor of $W$ and hence of $W \cup Z$. Otherwise, we know $v \notin W$ and it suffices to show $v \notin Z$. For a contradiction, suppose $v \in Z$ is a non-collider on $\pi$. As observed in the beginning of the proof, $\pi(v,x)$ and $\pi(v,y)$ cannot be causal paths from $v$ to the other endpoint. It follows that on the subpath $\pi(x,t)$ or $\pi(y,t)$ that contains $v$, vertex $v$ must be an ancestor of a collider (and hence of $W$) or an ancestor of $t$ (and hence of $z^0$), neither of which is possible: we showed $Z \cap \An_{\g}(W) = \emptyset$ and we choose $z^0$ such that no other vertex in $Z$ is an ancestor of $z^0$. Hence, $\pi$ d-connects $x$ and $y$ given $Z$ and $W$, which contradicts our assumption $x \indep_{\g} y \mid Z, W$.

\item Vertex $t$ is a non-collider on $\pi$. Then $t$ is a non-collider on either $p_{z^0}$ or $q_{z^0}$, which implies $t \notin W$. Because the status of any other non-endpoint vertex on $\pi$ remains the same as on $p_{z^0}$ or $q_{z^0}$, we know $\pi$ d-connects $x$ and $y$ given $W$, which contradicts our assumption $x \indep_{\g} y \mid W$.
\end{enumerate}
\end{proof}

We are ready to prove that $\CAY^{\ast}(S)$ and $\CYA^{\ast}(S)$ are minimal sufficient adjustment sets.
\begin{proof}[Proof of \cref{thm:minimal}]
That $C^{\ast}$ is a sufficient adjustment set follows from iteratively applying \cref{lem:predA,lem:predYA} under \cref{assump:S-already,assump:positivity}.

Now under the additional assumption of a causal graph $\g$ and the faithfulness with respect to SWIG $\g(a)$, we show no proper subset of $C^{\ast}$ is a sufficient adjustment set. Suppose $C' \subsetneq C^{\ast}$ controls for confounding. By faithfulness, we have
\begin{equation} \label{eqs:sep-C'}
 A \indep_{\g(a)} Y_a  \mid C'
\end{equation}
and
\begin{equation} \label{eqs:sep-C*}
 A \indep_{\g(a)} Y_a  \mid C^{\ast}.
\end{equation}
By \cref{thm:transitivity}, we know there exists $s \in (C^{\ast} \setminus C')$ such that either (or both) of the following holds.
\begin{enumerate}[(i)]
\item $A \indep_{\g(a)} s \mid C'$. Given also \cref{eqs:sep-C'}, by composition of d-separation (see \cref{apx:d-sep}), we have $A \indep_{\g(a)} s, Y_a \mid C'$. By weak union, it follows that $C'' = C' \cup \{s\}$ also controls for confounding.

\item $Y_a \indep_{\g(a)} s \mid C'$. Given also \cref{eqs:sep-C'}, by composition of d-separation, we have $Y_a \indep_{\g(a)} s, A \mid C'$. By weak union, it follows that $C'' = C' \cup \{s\}$ also controls for confounding.
\end{enumerate}
In either case, we see that $C'$ can be replaced by a set $C''$ whose size is increased by one. Iterating this argument until we get $\tilde{C} = C^{\ast} \setminus \{\tilde{s}\}$ for some element $\tilde{s} \in C^{\ast}$. We have
\begin{equation} \label{eqs:sep-Ctilde}
A \indep_{\g(a)} Y_a  \mid \tilde{C}.
\end{equation}
Applying \cref{thm:transitivity} again to \cref{eqs:sep-Ctilde,eqs:sep-C*}, at least one of the following holds.
\begin{enumerate}[(i)]
\item $A \indep_{\g(a)} C^{\ast} \setminus \tilde{C} \mid \tilde{C}$. By \cref{lem:reduce}, this means the treatment Markov boundary of $C^{\ast}$ is contained in $\tilde{C}$, contradicting \cref{lem:stability}.
\item $Y_a \indep_{\g(a)} C^{\ast} \setminus \tilde{C} \mid \tilde{C}$. Combining it with \cref{eqs:sep-Ctilde} and applying composition, we have $Y_a \indep_{\g(a)} A, C^{\ast} \setminus \tilde{C} \mid \tilde{C}$, which by weak union implies
\[Y_a \indep C^{\ast} \setminus \tilde{C} \mid A=a, \, \tilde{C}.\]
Using consistency, we have
\[ Y \indep C^{\ast} \setminus \tilde{C} \mid A=a, \, \tilde{C}, \quad a=0,1. \]
By \cref{lem:reduce}, this means the outcome Markov boundary of $C^{\ast}$ is contained in $\tilde{C}$, which again contradicts \cref{lem:stability}.
\end{enumerate}
\end{proof}

\subsection{Causal closure} \label{apx:closure}
In this subsection, we discuss the notion of causal closure with respect to a fixed ground DAG $\bar{\g}$, which is the underlying causal DAG that includes every variable in the system.
For two distinct vertices $v_1, v_2$ on $\bar{\g}$, we define their non-trivial common ancestors to be
\begin{multline*}
\An_{\bar{\g}}^{\ast}(v_1, v_2) \equiv \big\{u \in \An_{\bar{\g}}(v_1) \cap \An_{\bar{\g}}(v_2): \text{$u$ has a causal}\\ \text{path to $v_1$ or $v_2$ not through $\An_{\bar{\g}}(v_1)  \cap \An_{\bar{\g}}(v_2) \setminus \{u\}$} \big\}.
\end{multline*}
For example, if $\bar{\g}$ is $Z \rightarrow X \rightarrow Y$, then $Z$ is a common ancestor of $(X,Y)$ but not a non-trivial common ancestor: $Z$ causally goes to $X$ and $Y$ only through $\An(X) \cap \An(Y) \setminus \{Z\} = \{X\}$.

\begin{definition}[causal closure] \label{def:closure}
A vertex set $H$ is \emph{causally closed} if
\[ v_1, v_2 \in H,\, v_1 \neq v_2 \quad \implies \quad \An_{\bar{\g}}^{\ast}(v_1, v_2) \subset H. \]
Further, the \emph{causal closure} of $H$ is
\[ \overline{H} \equiv \bigcap\, \{H' \supseteq H: \text{$H'$ is causally closed} \}. \]
\end{definition}
Observe that a causal closure is always causally closed. Causal closure is defined with respect to those non-trivial common ancestors so pre-treatment variables irrelevant to the effect need not be included. In the example of \cref{fig:closure}, the causal closure of $\{S_1,S_2,A,Y\}$ is $\{Z_1,Z_3,S_1,S_2,A,Y\}$.

\begin{figure}[!htb]
  \centering
  \begin{tikzpicture}
    \tikzset{rv/.style={circle,inner sep=1pt,fill=gray!20,draw,font=\sffamily},
      fv/.style={circle,inner sep=1pt,fill=gray!40,draw,font=\sffamily},
      node distance=12mm, >=stealth}
      \node[rv] (A) {$A$};
      \node[rv, right=15mm of A] (Y) {$Y$};
      \node[rv, above of=A, xshift=3mm] (S1) {$S_1$};
      \node[rv, above of=Y, xshift=-3mm] (S2) {$S_2$};
      \node[rv, above of=S1, xshift=6mm] (Z1) {$Z_1$};
      \node[rv, left of=A] (Z2) {$Z_2$};
      \node[rv, left of=Z1](Z3) {$Z_3$};
      \node[rv, left of=S1](Z4) {$Z_4$};
      \draw[->,very thick] (A) -- (Y);
      \draw[->,very thick] (S1) -- (A);
      \draw[->,very thick] (S1) -- (Y);
      \draw[->,very thick] (S2) -- (A);
      \draw[->,very thick] (S2) -- (Y);
      \draw[->,very thick] (Z1) -- (S1);
      \draw[->,very thick] (Z1) -- (S2);
      \draw[->,very thick] (Z2) -- (A);
      \draw[->,very thick] (Z3) -- (Z1);
      \draw[->,very thick] (Z3) -- (Z4);
      \draw[->,very thick] (Z4) -- (S1);
  \end{tikzpicture}
  \caption{A ground DAG $\bar{\g}$}
  \label{fig:closure}
\end{figure}

\begin{lemma} \label{lem:closure}
For $u \in \overline{H}$, either $u \in H$ or $u$ is a common ancestor of two distinct vertices in $H$.
\end{lemma}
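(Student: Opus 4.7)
The plan is to define
\[ H^{\star} \equiv H \cup \{u : u \text{ is a common ancestor of two distinct vertices in } H\} \]
and show that $H^{\star}$ is itself causally closed. Since $H \subseteq H^{\star}$ by construction and $\overline{H}$ is the smallest causally closed superset of $H$, this immediately gives $\overline{H} \subseteq H^{\star}$, which is precisely the lemma. Note that $\overline{H}$ is well-defined as a smallest element because arbitrary intersections of causally closed sets are causally closed (immediate from \cref{def:closure}) and the full vertex set of $\bar{\g}$ is trivially causally closed.

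The core step is to verify that $H^{\star}$ is causally closed. Fix distinct $v_1, v_2 \in H^{\star}$ and $u \in \An_{\bar{\g}}^{\ast}(v_1, v_2)$; since $\An_{\bar{\g}}^{\ast}(v_1, v_2) \subseteq \An_{\bar{\g}}(v_1) \cap \An_{\bar{\g}}(v_2)$, $u$ is an ancestor of both $v_i$. By membership in $H^{\star}$, each $v_i$ either lies in $H$, or is itself a common ancestor of two distinct vertices $w_i^{(1)}, w_i^{(2)} \in H$; in the latter case, transitivity of ancestorship forces $u$ to be an ancestor of each such witness in $H$.

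A short case split on whether $v_1, v_2$ lie in $H$ then exhibits two distinct vertices of $H$ of which $u$ is a common ancestor. If both $v_1, v_2 \in H$, they themselves form the pair. If $v_1 \in H$ but $v_2 \notin H$, the two distinct witnesses $w_2^{(1)}, w_2^{(2)} \in H$ of $v_2$ already suffice (without needing $v_1$). If neither is in $H$, the witnesses of $v_1$ (or $v_2$) work. In every case $u \in H^{\star}$, so $H^{\star}$ is causally closed.

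The main subtlety, and the reason to isolate the witnesses explicitly, is ensuring \emph{distinctness} in the mixed case; this is resolved cleanly by the observation that every vertex in $H^{\star} \setminus H$ already comes packaged with two distinct witnesses in $H$. It is worth noting that the non-triviality clause in the definition of $\An^{\ast}$ plays no role here: only the containment $\An_{\bar{\g}}^{\ast}(v_1, v_2) \subseteq \An_{\bar{\g}}(v_1) \cap \An_{\bar{\g}}(v_2)$ is used. I do not foresee further obstacles.
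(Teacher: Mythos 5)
Your proof is correct, but it takes a genuinely different route from the paper's. You argue from above: you exhibit the explicit candidate $H^{\star} = H \cup \{u : u \text{ is a common ancestor of two distinct vertices of } H\}$, verify that it is causally closed via transitivity of the ancestor relation (with the case split guaranteeing two \emph{distinct} witnesses in $H$), and conclude $\overline{H} \subseteq H^{\star}$ from the minimality of the causal closure. The paper instead argues from below, by contradiction: assuming some $u \in \overline{H} \setminus H$ has at most one descendant in $H$, it deletes $\De_{\bar{\g}}(u) \setminus H$ from $\overline{H}$ and claims the pruned set $H'$ is a strictly smaller causally closed superset of $H$, contradicting the definition of $\overline{H}$. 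The paper's argument is shorter on the page but leaves the causal closedness of $H'$ as an ``observe'' (one must check that no non-trivial common ancestor of two surviving vertices was deleted), whereas your construction makes the closure verification fully explicit and, as a bonus, records the clean outer bound $\overline{H} \subseteq H^{\star}$ and the observation that only the containment $\An_{\bar{\g}}^{\ast}(v_1,v_2) \subseteq \An_{\bar{\g}}(v_1) \cap \An_{\bar{\g}}(v_2)$, not the non-triviality clause, is needed. Both proofs are valid; yours is arguably the more self-contained of the two.
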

\begin{proof}
We prove the statement by contradiction. Suppose $u \in \overline{H} \setminus H$ and $\De(u) \cap H$ is empty or a singleton. Consider $H' = \overline{H} \setminus (\De(u) \setminus H)$. By construction, $H \subseteq H' \subset \overline{H}$. Further, observe that $H'$ is causally closed. However, this contradicts the definition of $\overline{H}$.
\end{proof}

\begin{proposition} \label{prop:Z-pre}
The set $Z$ given by \cref{eqs:Z-by-closure} is pre-treatment.
\end{proposition}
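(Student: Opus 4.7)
\begin{proofsketch}
The plan is to take any $u \in Z$ and show $A \notin \An_{\bar{\g}}(u) \setminus \{u\}$, i.e., $u$ is not a proper descendant of $A$ in the ground DAG $\bar{\g}$. Since $S$ consists of pre-treatment covariates by definition, we know already that $S \cap \De_{\bar{\g}}(A) \subseteq \{A\}$; in particular, no element of $S$ is a strict descendant of $A$. The key tool is \cref{lem:closure}, which characterizes elements of the causal closure $\overline{S \cup \{A,Y\}}$: any such element lies in $S \cup \{A,Y\}$ itself, or is a common ancestor of two distinct vertices in $S \cup \{A,Y\}$.

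Pick $u \in Z$. Since $Z = \overline{S \cup \{A,Y\}} \setminus \{A,Y\}$, we have $u \notin \{A,Y\}$. By \cref{lem:closure}, either (i) $u \in S$, in which case $u$ is pre-treatment by hypothesis, or (ii) $u$ is a common ancestor of two distinct vertices $v_1, v_2 \in S \cup \{A,Y\}$. In case (ii), because $v_1 \neq v_2$, at least one of them, say $v_1$, is not $Y$, so $v_1 \in S \cup \{A\}$.

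Now I rule out $u \in \De_{\bar{\g}}(A) \setminus \{A\}$ in case (ii). Suppose for contradiction that there is a causal path $A \to \cdots \to u$ in $\bar{\g}$. Since $u$ has a causal path to $v_1$, concatenating yields a causal path from $A$ to $v_1$. If $v_1 = A$, this is a directed cycle in the DAG $\bar{\g}$, contradicting acyclicity. If $v_1 \in S$, then $v_1 \in \De_{\bar{\g}}(A) \setminus \{A\}$, contradicting that $S$ is pre-treatment. Either way we reach a contradiction, so $u$ is not a proper descendant of $A$. Combined with case (i), this shows every element of $Z$ is pre-treatment, completing the proof.

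The argument is essentially a case analysis and the only substantive ingredient beyond \cref{lem:closure} is acyclicity of $\bar{\g}$ together with the defining property of $S$; I do not anticipate any real obstacle, the only point requiring minor care is ensuring we pick the component $v_1 \neq Y$ so that the pre-treatment hypothesis or acyclicity gives a contradiction.
\end{proofsketch}
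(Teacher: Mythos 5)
Your proof is correct and follows essentially the same route as the paper's: both invoke \cref{lem:closure}, observe that at least one of the two distinct vertices $v_1,v_2 \in S\cup\{A,Y\}$ must lie in $S\cup\{A\}$, and conclude from $u$ being an ancestor of that vertex. The only difference is cosmetic: the paper directly concludes that $z$ is an ancestor of $S\cup\{A\}$ and hence pre-treatment, whereas you phrase the target as ``$u$ is not a proper descendant of $A$'' and reach it by a contradiction using acyclicity---the underlying fact you exploit (that $u$ is an ancestor of some $v_1 \in S\cup\{A\}$) is exactly the paper's one-line argument made explicit.
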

\begin{proof}
\cref{eqs:Z-by-closure} states that
\[ Z = \overline{S \cup \{A,Y\}} \setminus \{A,Y\}. \]
Because $S$ is pre-treatment, we shall show every $z \in Z \setminus S$ is pre-treatment.
By \cref{lem:closure}, $z$ is a common ancestor to two distinct vertices $v_1, v_2 \in S \cup \{A,Y\}$. It follows that either $v_1$ or $v_2$ is in $S \cup \{A\}$. It is clear that $z$ is an ancestor of $S \cup \{A\}$ and hence pre-treatment.
\end{proof}

\begin{proposition} \label{prop:Z-adj}
Suppose $\bar{\g}$ is a DAG in which $A \in \An_{\bar{\g}}(Y)$. Let $Z$ be a vertex set that contains no descendant of $A$. If $Z \cup \{A,Y\}$ is causally closed, then $Z$ is a sufficient adjustment set.
\end{proposition}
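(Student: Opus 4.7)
The plan is to verify the hypotheses of the back-door criterion (\cref{prop:backdoor}) for $Z$ in $\bar{\g}$. Since $Z$ contains no descendant of $A$, it consists of pre-treatment covariates, so it suffices to argue that $Z$ blocks every back-door path from $A$ to $Y$ in $\bar{\g}$.

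I proceed by contradiction: suppose $\pi$ is a back-door path from $A$ to $Y$ that d-connects given $Z$, and let $w$ denote the vertex on $\pi$ adjacent to $A$. Being the first edge of a back-door path, $w \to A$ in $\bar{\g}$; together with the hypothesis $A \in \An_{\bar{\g}}(Y)$, this gives $w \in \An_{\bar{\g}}(A) \cap \An_{\bar{\g}}(Y)$.

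The key observation is that the single edge $w \to A$ is itself a causal path from $w$ to $A$ with no intermediate vertices, so it vacuously avoids $\An_{\bar{\g}}(A) \cap \An_{\bar{\g}}(Y) \setminus \{w\}$. Hence $w$ lies in the set of non-trivial common ancestors $\An_{\bar{\g}}^{\ast}(A, Y)$ from \cref{def:closure}. Because $Z \cup \{A, Y\}$ is causally closed and $A, Y$ are distinct elements of it, this forces $w \in Z \cup \{A, Y\}$. Since the vertices of $\pi$ are distinct we have $w \neq A$, and since $w \in \An_{\bar{\g}}(A)$ while $A \in \An_{\bar{\g}}(Y)$, acyclicity rules out $w = Y$. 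Therefore $w \in Z$.

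Finally, the edge $w \to A$ on $\pi$ points away from $w$, so $w$ is a non-collider on $\pi$. A non-collider lying in the conditioning set blocks the path, contradicting the assumed d-connection. The only subtle point—and essentially the whole content of the argument—is recognizing that the single-edge causal path $w \to A$ already witnesses the non-triviality clause in \cref{def:closure}; once that is seen, no case analysis on colliders or on longer subpaths of $\pi$ is required.
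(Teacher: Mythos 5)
There is a genuine gap at the step you yourself flag as ``essentially the whole content of the argument.'' You read the clause ``a causal path to $v_1$ or $v_2$ not through $\An_{\bar{\g}}(v_1) \cap \An_{\bar{\g}}(v_2) \setminus \{u\}$'' in \cref{def:closure} as constraining only the intermediate vertices of the path, so that the single edge $w \rightarrow A$ qualifies vacuously. The paper's own illustration of \cref{def:closure} rules this reading out: in the chain $Z \rightarrow X \rightarrow Y$ the only causal path from $Z$ to $X$ is the single edge $Z \rightarrow X$, yet $Z$ is declared a \emph{trivial} common ancestor because it reaches $X$ and $Y$ ``only through'' $\An(X) \cap \An(Y) \setminus \{Z\} = \{X\}$; the terminal vertex of the causal path therefore counts as being passed through. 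Under the hypothesis $A \in \An_{\bar{\g}}(Y)$ we have $A \in \An_{\bar{\g}}(A) \cap \An_{\bar{\g}}(Y) \setminus \{w\}$, so the path $w \rightarrow A$ terminates inside the forbidden set and does not witness non-triviality. The conclusion $w \in \An_{\bar{\g}}^{\ast}(A,Y)$ is false in general: in the DAG with edges $U \rightarrow w$, $w \rightarrow A$, $U \rightarrow Y$, $A \rightarrow Y$, every causal path from $w$ to $A$ or to $Y$ passes through $A$ in this sense, so $w \notin \An_{\bar{\g}}^{\ast}(A,Y)$, while $\{U,A,Y\}$ is causally closed with $w \notin Z = \{U\}$. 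Hence causal closedness does not force the parent of $A$ on a back-door path into $Z$, and your contradiction evaporates. Indeed, if your implication held, every parent of $A$ on any back-door path --- including pure instruments whose only route to $Y$ is through $A$ --- would have to belong to $Z$, which is precisely what the ``non-trivial'' qualification is designed to avoid.

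The paper's proof cannot get away with inspecting only the first edge of $\pi$: it splits on whether the d-connecting back-door path contains a collider and, in each case, exhibits a \emph{non-collider} $L$ on $\pi$ lying in $\An_{\bar{\g}}^{\ast}(A,Y)$ (no-collider case) or in $\An_{\bar{\g}}^{\ast}(A,z) \cup \An_{\bar{\g}}^{\ast}(z,Y)$ for some $z \in Z$ into which the colliders of $\pi$ descend (collider case). The witness $L$ is typically a fork vertex from which $\pi$ descends causally towards both relevant endpoints, not the neighbour of $A$. Your closing moves --- that such an $L$ must lie in $Z \cup \{A,Y\}$ by closedness, yet must avoid $Z$ for $\pi$ to d-connect --- are correct and agree with the paper; what is missing is the identification of the witness, and that is the substantive part of the argument.
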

\begin{proof}
We prove by contradiction. Suppose $Z$ is not a sufficient adjustment set. Then, by \cref{prop:backdoor}, there exists a back-door path $\pi$ between $A$ and $Y$ that is d-connected given $Z$. There are two cases. If $\pi$ contains no collider, then one can show that there exists a non-collider vertex $L$ on $\pi$ such that $L \in \An_{\bar{\g}}^{\ast}(A,Y)$; otherwise, one can show that there exists a non-collider vertex $L$ on $\pi$ such that $L \in \An_{\bar{\g}}^{\ast}(A,z) \cup \An_{\bar{\g}}^{\ast}(z,Y)$ for some $z \in Z$. In either case, because $\pi$ d-connects $A$ and $Y$ given $Z$, it holds that $L \notin Z \cup \{A,Y\}$. However, this contradicts our assumption that $Z \cup \{A,Y\}$ is causally closed.
\end{proof}

\bibliographystyle{imsart-nameyear}

\end{document}